\newcommand{\clstset}{
\lstset{
  language=C, 
  showspaces=false, 
  showstringspaces=false, 
  showtabs=false, 
  tabsize=2, 
  captionpos=b, 
  mathescape=true, 
  numbers=none,
  basicstyle=\small\ttfamily,
  columns=fullflexible,
  commentstyle=\rmfamily\itshape,
  morekeywords={},
  literate={|->}{$\mapsto$}1
       {exists}{$\exists$}1
       {forall}{$\forall$}1
       {uplus}{$\uplus$}1
       {-*}{$\wand$}1
       {\\/}{$\lor$}1
       {/\\}{$\land$}1
}}
\newcommand{\partmap}{\rightharpoonup}
\newcommand{\partitive}{\mathcal{P}}
\newcommand{\eqdef}{\triangleq}
\newcommand{\msetfamily}{{\mathcal{M}}}
\newcommand{\ASSERT}{\ensuremath{\mathsf{assert}}}
\newcommand{\ASSUME}{\ensuremath{\mathsf{assume}}}
\newcommand{\Nodes}{\ensuremath{\mathsf{N}}}
\newcommand{\StartNode}{\ensuremath{\mathsf{start}}}
\newcommand{\EndNode}{\ensuremath{\mathsf{end}}}
\newcommand{\Succ}{\ensuremath{\mathsf{succ}}}
\newcommand{\Cmds}{\ensuremath{\mathsf{Cmd}}}
\newcommand{\Cmd}{\ensuremath{\mathsf{cmd}}}
\newcommand{\PAND}{\,\wedge\,}
\newcommand{\POR}{\,\vee\,}
\newcommand{\SPAND}{\,\ast\,}
\newcommand{\LEMPTY}{\ensuremath{\mathsf{emp}}}
\newcommand{\LTRUE}{\ensuremath{\mathsf{true}}}
\newcommand{\LFALSE}{\ensuremath{\mathsf{false}}}
\newcommand{\POINTSTO}[2]{\ensuremath{#1 \mapsto #2}}
\newcommand{\NODE}{\ensuremath{\mathsf{node}}}
\newcommand{\LSEG}{\ensuremath{\mathsf{list}}}
\newcommand{\LSEGless}{\LSEG_{\leq}}
\newcommand{\NIL}{\ensuremath{\mathsf{nil}}}
\newcommand{\SH}{\ensuremath{\mathsf{SH}}}
\newcommand{\CSH}{\ensuremath{\mathsf{CSH}}}
\newcommand{\DSH}{\partitive(\SH{})}
\newcommand{\SHmls}{\SH^{\sf mls}}
\newcommand{\SHrls}{\SH^{\sf rls}}
\newcommand{\SHsls}{\SH^{\sf sls}}
\newcommand{\Spec}{\ensuremath{\mathsf{spec}}}
\newcommand{\EVars}{\ensuremath{\mathsf{EVars}}}
\newcommand{\Project}{\ensuremath{\mathsf{pr}}}
\newcommand{\WeakestPre}{\ensuremath{\mathsf{wp}}}
\newcommand{\Path}{\ensuremath{\mathsf{path}}}
\newcommand{\Level}{\ensuremath{\mathsf{depth}}}
\newcommand{\NodesAt}{\ensuremath{\mathsf{nodes\_at}}}
\newcommand{\Parent}{\ensuremath{\mathsf{parent}}}
\newcommand{\Rearr}[1]{\overline{#1}}
\newcommand{\Inv}{\ensuremath{\mathsf{inv}}}
\newcommand{\Abs}{\ensuremath{\mathsf{abs}}}
\newcommand{\ART}{\ensuremath{\mathsf{ART}}}
\newif\ifcomment
\newcommand{\genericcomment}[2]{
\ifcomment
\begin{center}
\fbox{
\begin{minipage}{4.5in}
{\bf {#2}'s comment:} {\it #1}
\end{minipage}}
\end{center}
\fi}
\newcommand{\mikecomment}[1]{
\genericcomment{#1}{Mike}}
\newcommand{\matkocomment}[1]{
\genericcomment{#1}{Matko}}
\newcommand{\stephencomment}[1]{
\genericcomment{#1}{Stephen}}
\begin{document}
\pagestyle{plain}

\title{Refining Existential Properties \\ in Separation Logic Analyses}


\author{
Matko Botin\v{c}an\inst{1}
\and
Mike Dodds\inst{2}
\and
Stephen Magill\inst{3}
}

\institute{
\email{matko.botincan@gmail.com}
\and
University of York,
\email{mike.dodds@york.ac.uk}
\and
\email{stephen.magill@gmail.com}
}

\maketitle

{\ifcomment
{\bf Compiled:} \currenttime{} \today.
\fi}



\begin{abstract}

In separation logic program analyses, tractability is generally achieved by
restricting invariants to a finite abstract domain.  As this domain cannot
vary, loss of information can cause failure even when verification is possible
in the underlying logic.  In this paper, we propose a CEGAR-like method for
detecting spurious failures and avoiding them by refining the abstract domain.
Our approach is geared towards discovering existential properties, e.g. ``list
contains value x''.  To diagnose failures, we use abduction, a technique for
inferring command preconditions.  Our method works backwards from an error,
identifying necessary information lost by abstraction, and refining the forward
analysis to avoid the error.  We define domains for several classes of
existential properties, and show their effectiveness on case studies adapted
from Redis, Azureus and FreeRTOS.

\end{abstract}

%
%
%
%

\section{Introduction}
\vspace{-0.25em} 

Abstraction is often needed to automatically prove safety properties
of programs, but finding the \emph{right} abstraction can be difficult.
Techniques based on CEGAR ({\bf C}ounter\-{\bf E}xample-{\bf G}uided {\bf A}bstraction
{\bf R}efinement)~\cite{DBLP:conf/cav/ClarkeGJLV00,Kurshan94}
can automatically synthesise an abstraction
that is sufficient for proving a given property. Particularly
successful has been the application of CEGAR to predicate
abstraction~\cite{DBLP:conf/cav/GrafS97}, enabling automated
verification of a wide range of (primarily control-flow driven) safety
properties
\cite{DBLP:conf/spin/BallR01,DBLP:conf/icse/ChakiCGJV03,DBLP:conf/popl/HenzingerJMS02}.

Meanwhile, separation logic has emerged as a useful domain for verifying
shape-based safety properties
\cite{DBLP:conf/aplas/BerdineCO05,DBLP:conf/cav/BerdineCI11,DBLP:journals/jacm/CalcagnoDOY11,DBLP:conf/oopsla/DistefanoP08,DBLP:conf/cav/YangLBCCDO08}.
Its success stems from its ability to compositionally represent heap
operations. The domain of separation logic formulae is infinite, so to ensure
termination, program analyses abstract them by applying a function with a
finite codomain~\cite{DBLP:conf/tacas/DistefanoOY06}.  Although this approach
has proved effective in practice, it does not provide a means to recover from
spurious errors caused by over-abstraction.

This paper proposes a method for automated tuning of abstractions in
separation logic analyses.
Instead of a single abstraction, our method
works with families of abstractions parameterised by multisets,
searching for a parameter that makes the analysis succeed.
Expanding the multiset refines the abstraction, i.e. makes it more precise.

Our method uses a forward analysis that computes a fixpoint using
the current parameterised abstraction, and a backward analysis that
refines this abstraction by expanding the multiset parameter.
To identify the cause of the error and
propagate that information backwards along the counter-examples, we use
abduction, a technique for calculating sufficient preconditions of
program commands~\cite{DBLP:journals/jacm/CalcagnoDOY11}. We use the
difference in symbolic states generated during forward and backward
analysis to select new elements to add to the multiset.

%
%
Our approach focuses on existential properties, where we need to track some
elements of a data structure more precisely than the others.  For example, we
define the domain of ``\emph{lists containing at least particular values}''
(where the multiset parameter specifies the values). In general, our approach
works well for similar existential properties, e.g.  ``\emph{lists containing a
particular subsequence}''.
Existential properties arise e.g.
when verifying that key-value stores preserve values, or in structures that
depend on sentinel nodes.

\mikecomment{Forward pointer to related work on existential properties?}

\matkocomment{I couldn't find any reference dealing with ``existential properties''
like above. There is some work on \emph{existential} reasoning (e.g. Eric's
PLDI'13 paper) but it's about existential paths, i.e. nondeterminism.}

Our approach is complementary to standard separation logic shape analyses. It
adds a new tool to the analysis toolbox, but it is not a general abstraction-refinement
solution.  In particular, \emph{universal} properties such as ``\emph{all list
nodes contain a particular value}'' are not handled.  This is a result of
our analysis structure: when forward analysis fails, we look for portions of
the symbolic state sufficient to avoid the fault, and seek to protect them from
abstraction.  This is intrinsically an existential process.
\vspace{-0.5em} 


\subsection{Related Work}
\vspace{-0.25em} 

%

As in Berdine et al.~\cite{berdine-cav2012} we wish to
automate the process of `tweaking' shape abstractions.
In \cite{berdine-cav2012}, abstract counter-examples are passed to
a SMT solver, which produces concrete counter-example
traces.  These traces determine so-called \textit{doomed states}, conceptually
the same as those singled out for refinement by our procedure.
An advantage of our approach is that we use information
from the failed proof to inform the abstraction refinement
step, rather than exhaustively trying possible refinements as in
\cite{berdine-cav2012}.
This aside, the two approaches are largely complementary:
\cite{berdine-cav2012} focuses on discovering shape refinements, while
our work focuses on data properties.

Our approach operates lazily, but in contrast to lazy
abstraction~\cite{DBLP:conf/popl/HenzingerJMS02}, we do not re-compute
the abstract \emph{post} operator each time we refine the abstraction.
The intermediate formulae we compute during backward analysis can be
seen as interpolants~\cite{DBLP:conf/cav/McMillan06}, but rather than
taking these directly for refining the abstraction, we use them to
select new parameters which have the effect of refining the
abstraction. Such automatic discovery of parameters for parameterised
domains is similar to Naik et al.~\cite{DBLP:conf/popl/NaikYCS12}, however,
instead of analyzing concrete tests, we analyze abstract counter-examples.

Our notion of an abstraction function is similar to
widening~\cite{DBLP:conf/popl/CousotC77}. However, refining the
abstraction with the least upper bound (as
Gulavani and Rajamani~\cite{DBLP:conf/tacas/GulavaniR06}) would not
converge due to the presence of recursive data structures.
\cite{DBLP:conf/sas/ChangRN07} gives a widening for shape domains,
but this widening does not account for data, nor explicitly track
existential properties.
Refinement with an interpolated
widen~\cite{DBLP:conf/tacas/GulavaniCNR08}, while similar to ours, is
also not applicable as we do not work in a complete lattice that is
closed under Craig interpolation. Shape analyses such as
TVLA~\cite{DBLP:journals/toplas/SagivRW02} have been adapted for
abstraction refinement~
\cite{DBLP:conf/cav/BeyerHT06,DBLP:conf/cav/LoginovRS05}, however,
we believe these approaches could not automatically handle
verification of existential properties such as those in \S
\ref{sec:experiments}.

\mikecomment{We say that widening doesn't converge. But we don't necessarily
converge either!}

The refinement process in our approach assumes a parameterised domain
of symbolic heaps which can be refined by augmenting the multiset of
parameters. Compared to predicate abstraction, where the abstract
domain is constructed and refined automatically, in our approach we
first have to hand-craft a parameterised domain. In part this reflects
the intrinsic complexity of shape properties compared to properties
verifiable by standard predicate abstraction.

Several authors have experimented with separation logic domains recording
existential information about stored data,
e.g.~\cite{DBLP:conf/vmcai/Vafeiadis09,Chin2012bag}.  Some of these domains
could be formulated in our multiset-parametric approach, and vice versa.
However, our work differs in that we focus on automating the process of
refining the abstraction.
\vspace{-0.25em} 

\section{Intuitive Description of Our Approach}
\label{sec:example}
\vspace{-0.25em} 

We now illustrate how over-abstraction can cause traditional separation-logic
analyses to fail, and how our approach recovers from such failures.  Our
running example, given in Figure~\ref{fig:example-intraprocedural}, is a simple
instance of the pattern where a value is inserted into an pre-existing
data-structure, the data-structure is further modified, and the program then
assumes the continued presence of the inserted value.  Our code first
constructs a linked list of arbitrary length
(we use `\texttt{*}' for non-deterministic choice). It picks an
arbitrary value for {\tt x}, and creates a node storing this value. It extends
the list with arbitrarily more nodes. Finally, it searches for the node storing
{\tt x} and faults if it is absent.

\begin{figure}[tb]
\vspace{-0.5cm}
\centering
\begin{tabular}{@{}cc}
\begin{minipage}{4cm}
\vspace*{1.5cm}
\begin{lstlisting}[basicstyle=\ttfamily\small]
r = nil;
while (*) {
  r = new Node(r,*);
}
x = *;
r = new Node(r,x);
while (*) {
  r = new Node(r,*);
}
t = r; res = 0;
while(res==0 && t!=nil){
  d = t->data;
  if (d==x) res = 1;
  t = t->next;
}
assert(res==1);
\end{lstlisting}
\end{minipage}
&
{}\hspace{-2cm}
\begin{tabular}{c}
\includegraphics[width=10cm]{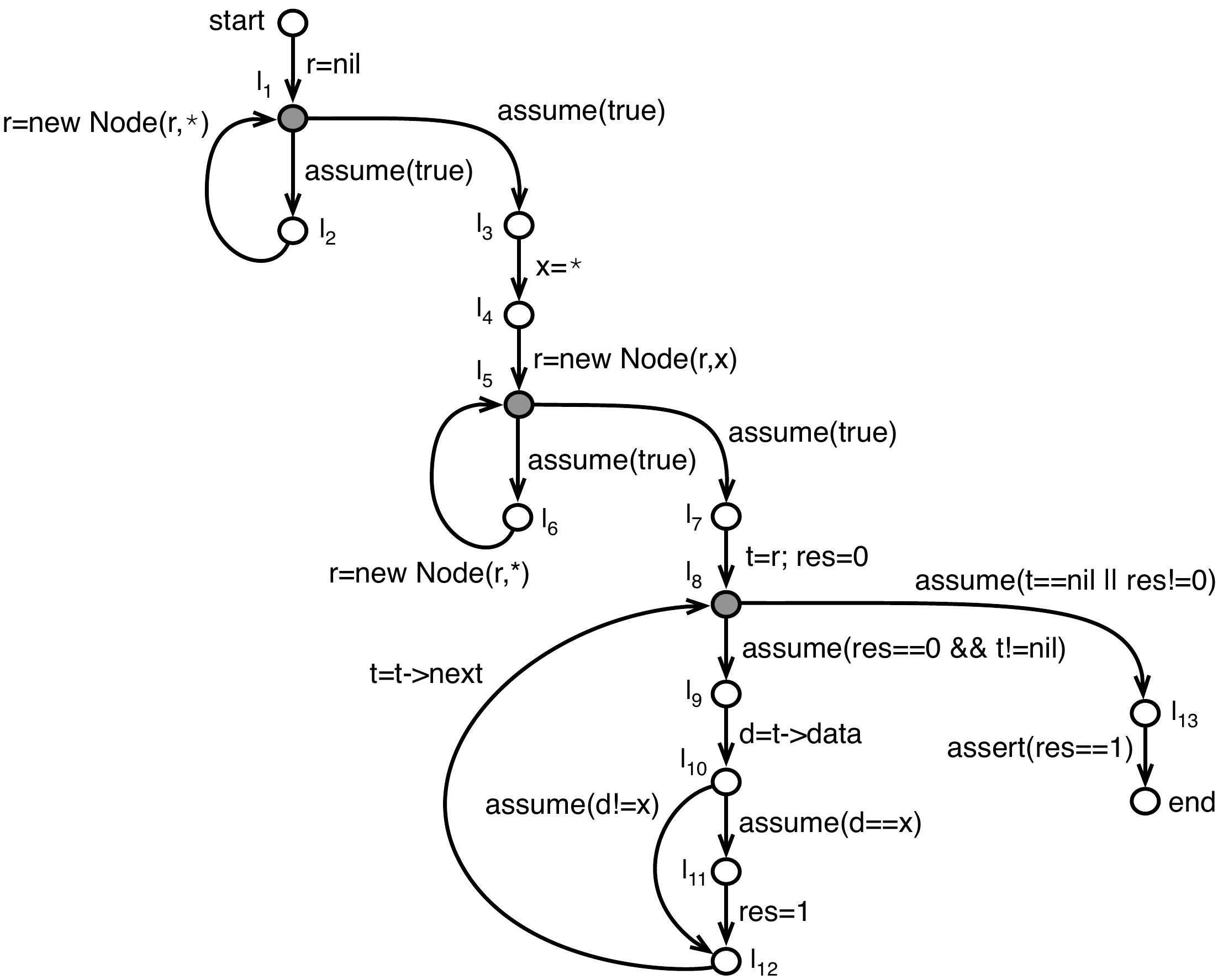}
\end{tabular}
\end{tabular}
\vspace{-1em} 
\caption{Left: running example. Right: associated control-flow graph.
Nodes where abstraction occurs are shaded.}
\label{fig:example-intraprocedural}
\end{figure}

Suppose our abstract domain consists of the predicates $\LEMPTY$,
representing the empty heap, $\NODE(x, y, d)$, representing a linked list
node at address $x$ with next pointer $y$ and data contents $d$, and
$\LSEG(x, y)$, representing a non-empty list segment of unrestricted
length starting at address $x$ and ending with a pointer to $y$.
Nodes and list segments are related by the following recursive definition:
\vspace{-0.5em}
\begin{equation*}
\vspace{-0.5em}
\LSEG(x,y) \quad \eqdef \quad
\NODE(x,y,d')
\vee
(\NODE(x,n',d') * \LSEG(n',y))
\end{equation*}
(Primed variables---$x', y'$, etc.---indicate logical variables that are
existentially quantified).
A traditional analysis, e.g. \cite{DBLP:conf/tacas/DistefanoOY06},
starts with the pre-condition $\LEMPTY$ and propagates symbolic states
over the control-flow graph (right of
Fig.~\ref{fig:example-intraprocedural}). Consider the execution of the
program that adds a single node in the first while loop (node $l_1$)
then adds $\tt x$ to the list, skips the second loop, and then
searches for $\tt x$ (node $l_8$).
Following the two list insertions (node $l_5$) we obtain symbolic state
\vspace{-0.5em}
\begin{equation*}
\vspace{-0.5em}
\NODE({\tt r}, r', {\tt x}) \SPAND \NODE(r', \NIL, d')
\end{equation*}
As is typical, assume the analysis applies the
following abstraction step:
\vspace{-0.5em}
\begin{equation*}
\vspace{-0.5em}
\NODE({\tt r}, r', {\tt x}) \SPAND \NODE(r', \NIL, d') \;\; \leadsto \;\;\LSEG({\tt r}, \NIL)
\end{equation*}
That is, it forgets list length and data values once there are two nodes in
the list. At the head of
the third while-loop (node $l_8$) it unfolds $\LSEG({\tt r}, \NIL)$
back to the single-node case, yielding $\NODE({\tt r}, \NIL, x')$.
Since this state is too weak to show that ${\tt x} = x'$, the path
where ${\tt res}$ is not set to $1$ appears feasible, and the analysis cannot
prove {\tt assert(res==1)}.

\paragraph{Our solution.}
The analysis has failed spuriously because it has abstracted away the
existence of the node containing {\tt x}. We cannot
remove abstraction entirely, and we
also cannot pick a tailored abstraction \emph{a priori}, because the
appropriate abstraction is sensitive to the target program and the
required safety property.
Instead, we work with a
parameterised \emph{family} of abstractions. Starting with the coarsest
abstraction, we modify its parameters based on spurious
failures, automatically
tailoring the abstraction to the property we want to prove.

For our example, we augment the domain with a family of predicates
$\LSEG(\_,\_, {\{ {\tt d} \}})$, representing a list where at least one
node holds the value {\tt d} (this domain is
defined in \S \ref{sec:mls}). Upon failing to prove the program,
our backwards analysis looks for extensions of
symbolic states that \emph{would} satisfy
{\tt assert(res==1)}, and so avoid failure. Technically, this is
achieved by posing successive abduction queries
along the counter-example path. If an extension is found, then the
difference between the formulae from forward and backward analysis
identifies the cause of the spurious failure. In our example, the
analysis infers that the failure was due to the abstraction of the node
storing ${\tt x}$. We refine the abstraction so
nodes containing ${\tt x}$ are rewritten to $\LSEG(\_,\_, {\{ {\tt x} \}})$,
``remembering'' the existence of {\tt x}. This
suffices to prove the program correct.
\vspace{-0.25em} 

\section{Analysis Structure}
\label{sec:background}
\vspace{-0.25em} 



\paragraph{Symbolic heaps.}
A symbolic heap $\Delta$ is a formula of the form $\Pi \PAND \Sigma$
where $\Pi$ (the pure part) and $\Sigma$ (the spatial part) are defined
by:\vspace{-0.5em}
\begin{eqnarray*}
\vspace{-0.5em}
\Pi & ::= & \LTRUE \, \mid \LFALSE \, \mid \, e = e \, \mid \, e \neq e \,
\mid  \, p(\bar{e}) \, \mid \, \Pi \PAND \Pi \\
\Sigma & ::= & \LEMPTY \, \mid \, s(\bar{e})
\, \mid \, \Sigma \SPAND \Sigma
\end{eqnarray*}
\vspace{-0.25em}
Here $e$ ranges over (heap-independent) expressions (built over program
and logical variables), $p(\bar{e})$ over pure predicates and
$s(\bar{e})$ over spatial predicates. Logical variables are
(implicitly) existentially quantified; the set of all such variables in
$\Delta$ is denoted by $\EVars(\Delta)$. $\Sigma_1 * \Sigma_2$ holds
if the state can be split into two parts with disjoint domains, one
satisfying $\Sigma_1$ and the other $\Sigma_2$. A disjunctive symbolic
heap is obtained by combining symbolic heaps (both the pure and spatial
part) with disjunction. We identify a disjunctive heap with the set of
its disjuncts, and also denote such heaps with $\Delta$. The set of all
consistent symbolic (resp. disjunctive) heaps is denoted by $\SH{}$
(resp.~$\DSH{}$).

\paragraph{Abstract domain.}
Our abstract domain is the join-semilattice $(\DSH^{\top}, \vdash,
\sqcup, \top)$, where $\DSH^{\top} \triangleq \DSH{} \cup \{ \top \}$,
the partial order is given by the entailment relation $\vdash$, the
join $\sqcup$ is disjunction, and the top element, $\top$,
represents error.

We assume a sound theorem prover that can deal with entailments between
symbolic heaps, frame inference, and abduction queries (square brackets
denote the computed portion of the entailment):
\vspace{-0.5em}
\begin{itemize}
\item $\Delta_{1} \vdash \Delta_{2} \SPAND [\Delta_F]$ (frame
inference): given $\Delta_{1}$ and $\Delta_{2}$, find the frame
$\Delta_F$ such that $\Delta_{1} \vdash \Delta_{2} \SPAND \Delta_F$
holds;
\item $\Delta_{1} \SPAND [\Delta_A] \vdash \Delta_{2}$ (abduction):
given $\Delta_{1}$ and $\Delta_{2}$, find the `missing' assumption
$\Delta_A$ such that $\Delta_{1} \SPAND \Delta_A \vdash
\Delta_{2}$ holds.
\end{itemize}


\paragraph{Specifications and programs.}
We assume that each atomic command $c \in \Cmds{}$ is associated with a
specification $\{ P \} \, c \, \{ Q \}$, consisting of a precondition
$P$ and a postcondition $Q$ in $\SH{}$ (in fact, our case studies use
specifications expressed by using points-to and (dis)equalities only).
We define $\ASSUME(e) \triangleq \{ {\sf true} \} \, \{ e \}$ and
$\ASSERT(e) \triangleq \{ e \} \, \{ e \}$. Specifications are
interpreted using standard partial correctness: $\{ P \} \, c \, \{ Q
\}$ holds iff when executing $c$ from a state satisfying $P$, $c$ does
not fault, and if it terminates then the resulting state satisfies $Q$.
As is standard in separation logic, we also assume specifications are
\emph{tight}: $c$ will not access any resources outside of the ones
described in $P$.

We represent programs using a variant of intra-procedural control-flow
graphs \cite{DBLP:conf/popl/HenzingerJMS02} over the set of atomic
commands $\Cmds{}$. A CFG consists of a set of nodes $\Nodes{}$
containing distinguished starting and ending nodes $\StartNode{},
\EndNode{} \in \Nodes{}$, and functions, $\Succ \colon \Nodes{} \to
\partitive(\Nodes{})$  and $\Cmd \colon \Nodes{} \times \Nodes{}
\partmap{} \Cmds{}$, representing node successors and edge labels. All
nodes either have a single successor, or all outgoing edges are
labelled with command $\ASSUME(e)$ for the condition $e$ that must hold
for that edge to be taken.

\paragraph{Forward and backward transfer.}
We define the abstract forward semantics of each atomic command $c$ by
a function $\llbracket c \rrbracket \colon \SH{} \to \DSH^{\top}$. The
function $\llbracket c \rrbracket$, fusing together rearrangement
(materialisation) and symbolic execution
\cite{DBLP:journals/toplas/SagivRW02,DBLP:conf/aplas/BerdineCO05,DBLP:conf/tacas/DistefanoOY06,DBLP:journals/jacm/CalcagnoDOY11},
is defined using the \emph{frame rule}, which allows any triple
$\{ P \} \, c \, \{ Q \}$ to be extended by an arbitrary frame $\Delta_F$
that is not modified by $c$:
\vspace{-0.5em} 
$$
\llbracket c \rrbracket (\Delta) \quad \eqdef \quad
\left\{
\begin{array}{l@{\;\;\;}l}
\top & \textrm{if } \nexists \Delta_F \ldotp \, \Delta \vdash P \SPAND \Delta_F
\\[0.2em]
\{ Q \SPAND \Delta_F \mid \Delta \vdash P \SPAND \Delta_F \} & \textrm{otherwise.}
\end{array}
\right.
$$
When there is no $\Delta_F$ such that $\Delta \vdash P \SPAND
\Delta_F$, the current heap $\Delta$ does not satisfy the
precondition $P$ of the command, and so execution may result in an
error. We assume that the prover filters out inconsistent heaps. Lifting
disjunctions to sets on the left-hand side is justified by
the disjunction rule of Hoare logic.
We lift $\llbracket c \rrbracket$ to a forward transfer function
$\DSH^{\top} \to \DSH^{\top}$ by mapping $\top$ to $\top$ and a set of
symbolic heaps to the join of their $\llbracket c \rrbracket$-images.

We use abduction to transfer symbolic heaps backwards: given a
specification $\{ P \} \, c \, \{ Q \}$ and disjunctive symbolic heap
$\Delta$, if $\Delta_A$ is such that $Q \SPAND \Delta_A \vdash \Delta$
then $\{ P \SPAND \Delta_A \} \, c \, \{ \Delta \}$, i.e., we can
``push'' $\Delta$ backwards over $c$ to obtain $P \SPAND \Delta_A$ as a
pre-state. This gives rise to a \emph{backward} transfer function
$\llbracket c \rrbracket^{\leftarrow} \colon \DSH{} \to \DSH{}$ defined
by:
\vspace{-0.5em} 
$$
\llbracket c \rrbracket^{\leftarrow} (\Delta)
\quad \eqdef \quad
\textit{choose}(
\{ P \SPAND \Delta_A \mid Q \SPAND \Delta_A \vdash \Delta \})
$$
The heuristic function $\textit{choose}(-)$ selects a `good'
abduction solution (there can be many, e.g. a trivial one,
$\LFALSE$). For some fragments best
solutions are possible: e.g. the disjunctive points-to fragment with
(dis)equalities~\cite{DBLP:journals/jacm/CalcagnoDOY11}, a variation of
which we use in our backward analysis. Along $\ASSUME$-edges we have
$\llbracket \ASSUME(e) \rrbracket^{\leftarrow}(\Delta) \, = \,
\WeakestPre(\ASSUME(e), \Delta) \, = \, \neg e \POR \Delta$.
\vspace{-0.5em} 


\subsection{Forward Analysis, Abstraction Function, and Parametricity}
\label{sec:forward-analysis}
\vspace{-0.25em} 

Forward analysis attempts to compute an inductive invariant $\Nodes{}
\to \DSH^{\top}$. It gradually weakens the strongest property by
propagating symbolic heaps along CFG edges using the forward transfer,
and joining the obtained $\llbracket c \rrbracket$-images at each CFG
node. Since our abstract domain is infinite, and
transfer functions are not necessarily monotone, forward propagation alone may
not reach a fixpoint, or even converge towards one.

We call a pair $(\SH{}, \Abs)$ an \emph{analysis}.
To ensure termination, propagated symbolic heaps are abstracted
into a finite set, and the propagation process is made
inflationary.\footnote{A function $f \colon (A, \sqsubseteq) \to (A,
\sqsubseteq)$ is \emph{inflationary} if for every $a$, we have $a
\sqsubseteq f(a)$.}
Abstraction is realised by a function $\Abs \colon \SH{} \to \CSH{}$
whose codomain is a finite subset $\CSH{}$ of $\SH{}$. At each
step,
$\Abs$ replaces the propagated symbolic heap with a logically weaker one
in $\CSH{} \cup \{ \top \}$. We require $\Abs$ to be inflationary,
i.e., that it soundly over-approximates symbolic heaps with respect to
$\vdash$. Making the propagation inflationary means that instead of
computing the (least) fixed-point of the functional $\Phi \colon
(\Nodes{} \to \DSH^{\top}) \to (\Nodes{} \to \DSH^{\top})$, we compute
the inflationary fixed-point of the functional $X \mapsto X \cup
\Phi(X)$.



\begin{definition}[analysis comparison]
Let $\Abs \colon \SH{} \to \CSH{}$ and $\Abs' \colon \SH{} \to \CSH{}'$
be abstraction functions. We say that $\Abs'$ refines $\Abs$, written
$\Abs \preceq \Abs'$, if $\CSH{} \subseteq \CSH{}'$ and for every
$\Delta \in \SH{}$, $\Abs'(\Delta) \vdash \Abs(\Delta)$. We say that
$(\SH{}, \Abs')$ is more precise than $(\SH{}, \Abs)$ if $\Abs
\preceq \Abs'$.
\end{definition}


In \S \ref{sec:abs_ref_details} and \S \ref{sec:list-multiset-refinement} we
introduce families of analyses whose abstraction functions are parameterised by
a multiset (such analyses are \emph{parametric} in the sense of
\cite{DBLP:conf/popl/NaikYCS12}). For any such family $(\SH{}, \Abs_S)_{S \in
\msetfamily}$, where $\msetfamily$ is some family of multisets and $\Abs_S
\colon \SH \to \CSH_S$, we require that if $S \subseteq S'$ then $\Abs_{S}
\preceq \Abs_{S'}$.
\vspace{-0.5em} 


\subsection{Forward-Backward Abstraction Refinement Algorithm}
\label{sec:abs_ref_details}
\vspace{-0.25em} 

We now define an intra-procedural version of our analysis formally (we believe
it could be made inter-procedural without difficulty -- see
\S\ref{sec:abs_ref_properties}).  Let $(\SH{},\Abs_S)_{S \in \msetfamily}$ be a
family of analyses parameterised by a multiset. Our method for abstraction
refinement starts with the analysis $(\SH{},\Abs_{\emptyset})$, and iteratively
refines the abstraction by adding terms to the multiset $S$. The goal is to
eventually obtain $S$ such that using the analysis $(\SH{},\Abs_S)$ we can
compute a sufficient inductive invariant.

\SetFuncSty{textbf}

\paragraph{Forward analysis.}
\begin{algorithm}[tb]
\SetKwFunction{KwRefine}{Refine}
\SetKwFunction{KwRefineAbstraction}{SelectSymbols}
$S := \emptyset$; \ $t_0 := (\StartNode{}, \LEMPTY{})$; \ $k := 0$; \ $E_{\mathcal{T}} = \emptyset$; \ $\mathcal{T} =
(\{t_0\}, E_\mathcal{T}, t_0)$\;
\While{$\NodesAt_{\mathcal{T}}(k) \neq \emptyset$}{\label{fig:Fixpoint:main-loop}
  \ForEach{$t = (n, \Delta) \in \NodesAt_{\mathcal{T}}(k)$}{
    \ForEach{$n' \in \Succ(n)$}{
      $\mathcal{D}' := \llbracket \Cmd(n, n') \rrbracket (\Delta)$\;
      \eIf{$\mathcal{D}' = \top$}{
        Add $(t, (n', \top))$ to $E_{\mathcal{T}}$\;
        $k, S := \text{\KwRefine{$(n', \top), S$}}$\;
        Break to the outermost while-loop\;
      }{
        \ForEach{$\Delta' \in \mathcal{D}$}{
          $\Delta_{\sf abs}' := \Abs_S(\Delta')$\;
          \If{$\Delta_{\sf abs}' \not\sqsubseteq \Inv_{\mathcal{T}}(n')$}{
            $\mathcal{T}_{\sf old} := \mathcal{T}$\;
            Add $(t, (n', \Delta_{\sf abs}'))$ to $E_{\mathcal{T}}$\;
            $\Inv_{\mathcal{T}} := \Inv_{\mathcal{T}_{\sf old}} [n' \mapsto \Inv_{\mathcal{T}_{\sf old}}(n') \sqcup \Delta_{\sf abs}']$\;
          }
        }
      }
    }
  }
  $k := k+1$\;
}
\medskip
\caption{Forward analysis with abstraction refinement.}
\label{fig:Fixpoint-Abstraction-Refinement}
\end{algorithm}

Algorithm~\ref{fig:Fixpoint-Abstraction-Refinement} shows a forward
analysis from \S \ref{sec:forward-analysis} extended with abstraction
refinement. The algorithm computes a fixpoint by constructing an
abstract reachability tree (ART). An ART is a tree $\mathcal{T} = (T,
E, t_0) \in \ART{}$ where $T$ is the set of nodes, $E$ the set of edges
and $t_0$ the root node. We write $E_{\mathcal{T}}$ to refer to the set
of edges associated with a particular ART $\mathcal{T}$. Nodes in $T$
are of the form $(n, \Delta) \in \Nodes{} \times \SH{}$ and represent
the abstract states visited during the fixpoint computation. We use the
following functions to deal with the ART: $\Parent_{\mathcal{T}} \colon
T \setminus \{ t_0 \} \to T$ returning the unique parent of a node,
$\Level_{\mathcal{T}} \colon T \to \mathbb{N}_0$ returning the length
of the path from $t_0$ to $t$, and $\NodesAt_{\mathcal{T}} \colon
\mathbb{N}_0 \to \partitive(T)$ returning the set of all nodes at the
given depth. For $\mathcal{T} = (T,E,t_0)$ and $\mathcal{T'} =
(T',E',t_0')$, we write $\mathcal{T} \subseteq \mathcal{T'}$ to
indicate that $\mathcal{T}$ is a subtree of $\mathcal{T'}$, i.e., that
$T \subseteq T'$, and $E \subseteq E'$, and $t_0 = t_0'$.  We write
$\Cmd(n, n')$ to represent the command labelling the edge between nodes
$n$ and $n'$ and $\Spec(n, n')$ for the corresponding specification.

The algorithm iteratively propagates $\llbracket \cdot
\rrbracket$-images of previously-computed abstract states along CFG
edges, applies abstraction if the result is consistent, and joins each
newly computed state with the previously-computed states at the same
node. We store the invariant computed at each step using a map $\Inv{}
\colon \ART{} \to (\Nodes{} \to \DSH^{\top})$, reflecting the fact that
the invariant at a control point can be recovered from the node labels
of the ART. If we have $\NodesAt(k) = \emptyset$ for the current depth
$k$, then we have successfully computed an inductive invariant without
reaching an error.

Suppose at some point the transfer function returns $\top$, i.e., the
forward analysis fails to prove a property (e.g., a pure assertion or a
memory safety pre-condition of a heap-manipulating command). This can
happen due to either a true violation of the property, or a spurious
error caused by losing too much information somewhere along the
analysis. The algorithm then invokes Algorithm~\ref{fig:Refinement},
\KwRefine, to check for feasibility of the error and, if it is
spurious, to refine the abstraction.

\begin{algorithm}[t]
\DontPrintSemicolon
\KwRefine{$t^{\top} \colon T$, $S \colon \msetfamily$}$\colon
   N_0 \times \msetfamily$\;
\PrintSemicolon
\Begin{
  $k := \Level(t^{\top}) - 1$\;
  $t_{\sf curr} := t^{\top}; \; \; t_{\sf prev} := \Parent(t^{\top})$\;
  $\{ P \} \, \_ \, \{ \_ \} := \Spec(t_{\sf prev}.n, t_{\sf curr}.n)$\;
  Solve $\Rearr{t_{\sf prev}.\Delta} \SPAND [\Delta_A] \vdash P \SPAND \LTRUE$\; \label{line:abd_error}
  $\Delta_{\sf prev}' := \Rearr{t_{\sf prev}.\Delta} \SPAND \Delta_A$\;
  $\Path_{\sf fwd} := t_{\sf prev}.\Delta; \;\; \Path_{\sf bwd} := \Delta_{\sf prev}'$\;
  \While{$k>0$}{
    $t_{\sf curr} := t_{\sf prev}; \; \; t_{\sf prev} := \Parent(t_{\sf curr}); \; \; \Delta_{\sf curr}' := \Delta_{\sf prev}'$\;
    $\Delta_{\sf prev}' := \llbracket \Cmd(t_{\sf prev}.n, t_{\sf curr}.n) \rrbracket^{\leftarrow}(\Delta_{\sf curr}')$\; \label{line:back_exec}
    $\Path_{\sf fwd} := t_{\sf curr}.\Delta \cdot \Path_{\sf fwd}; \;\;
       \Path_{\sf bwd} := \Delta_{\sf curr}' \cdot \Path_{\sf bwd}$\;
    \If{$t_{\sf prev}.\Delta \vdash \Delta_{\sf prev}'$}{ \label{line:refine_abs}
      $S := S \cup \text{\KwRefineAbstraction{$\Path_{\sf fwd}, \Path_{\sf bwd}$}}$\; \label{line:call_selectsymbol}
      Delete $t_{\sf curr}$-subtree of $\mathcal{T}$\;\label{fig:Refinement:delete}
      \Return{$k-1, S$}\;
    }
    $k := k - 1$\;
  }
  throw ``error''\; \label{line:exit_error}
}
\medskip
\caption{Backward analysis of counter-example by abduction.}
\label{fig:Refinement}
\end{algorithm}

\paragraph{Backward analysis.}
Algorithm~\ref{fig:Refinement}, \KwRefine, operates by backward
analysis of abstract counter-examples. Rather than using weakest
preconditions as in CEGAR, \KwRefine uses abduction to propagate
formulae backwards along an abstract counter-example and check its
feasibility. Once a point in the path is found where forward analysis
agrees with the backward analysis, the mismatch between the symbolic
heaps from forward and backward analyses is used to update the
multiset $S$ that determines the abstraction.

\begin{figure}[t]
\centering
{\small
$$
\begin{array}{@{}c@{\;\;}c@{}}
\begin{array}{@{}rl}
t_{0} \colon & (\StartNode, \LEMPTY)\\
t_{1} \colon & (l_{1}, {\tt r} = \NIL)\\
t_{2} \colon & (l_{2}, {\tt r} = \NIL)\\
t_{3} \colon & (l_{1}, \NODE({\tt r}, \NIL, \_))\\
t_{4} \colon & (l_{3}, \NODE({\tt r}, \NIL, \_))\\
t_{5} \colon & (l_{4}, \NODE({\tt r}, \NIL, \_))\\
t_{6} \colon & (l_{5}, \LSEG({\tt r}, \NIL))\\
t_{7} \colon & (l_{7}, \LSEG({\tt r}, \NIL))\\
\end{array}
&
\begin{array}{rl@{}}
t_{8} \colon & (l_{8}, \LSEG({\tt r}, \NIL) \PAND {\tt t} = {\tt r} \PAND {\tt res} = 0)\\
t_{9} \colon & (l_{9}, \LSEG({\tt r}, \NIL) \PAND {\tt t} = {\tt r} \PAND {\tt res} = 0)\\
t_{10}\colon  & (l_{10}, \NODE({\tt r}, \NIL, d') \PAND {\tt t} = {\tt r} \PAND {\tt res} = 0 \PAND {\tt d} = d')\\
t_{11}\colon  & (l_{12}, \NODE({\tt r}, \NIL, d') \PAND {\tt t} = {\tt r} \PAND {\tt res} = 0 \PAND {\tt d} = d' \PAND {\tt d} \neq {\tt x})\\
t_{12}\colon  & (l_{8}, \NODE({\tt r}, \NIL, d') \PAND {\tt t} = \NIL \PAND \hfill {\tt res} = 0 \PAND {\tt d} = d' \PAND {\tt d} \neq {\tt x})\\
t_{13}\colon  & (l_{13}, \NODE({\tt r}, \NIL, d') \PAND {\tt t} = \NIL \PAND {\tt res} = 0)\\
t_{14}\colon  & (\EndNode, \top)
\\
\\
\end{array}
\end{array}
$$
}
\vspace{-2em} 
\caption{Abstract counter-example for the running example (\S \ref{sec:example}).}
\label{fig:abstract-counter-example}
\end{figure}

\begin{definition}
An \emph{abstract counter-example} is a sequence $(n_0, \Delta_0) \ldots
(n_k, \Delta_k)$ with:
\vspace{-0.5em} 
\begin{itemize}
\item $n_0  = \StartNode{}$ and for all $0 < i \leq k$, $n_i \in \Succ(n_{i-1})$;
\item $\Delta_0 = \LEMPTY{}$, for all $0 < i \leq k$, $\Delta_i \in
\Abs_S(\llbracket \Cmd(n_{i-1}, n_{i}) \rrbracket (\Delta_{i-1}))$ and
$\Delta_k = \top$.
\end{itemize}
\end{definition}
Figure \ref{fig:abstract-counter-example} shows the abstract
counter-example for the error discussed in \S \ref{sec:example}. This
is the sequence of symbolic heaps computed by the analysis on its way
to the error.  This counter-example covers the case where the list
contains just one node. The error results from over-abstraction, which
has erased the information that this node contains the value 0 (this
can be seen in the last non-error state, $t_{13}$).

\KwRefine begins by finding a resource or pure assumption sufficient
to avoid the terminal error in the counter-example. Let $(n_0,
\Delta_0) \ldots (n_k, \top)$ be an abstract counter-example with
$\Cmd(n_{k-1}, n_{k}) = \{ P_k \} \, c_k \, \{ Q_k \}$. Since
$\llbracket c_k \rrbracket (\Delta_{k-1}) = \top$, $\Delta_{k-1}$
misses some assumption required to satisfy $P_k$. To find this,
\KwRefine solves the following abduction query
(line~\ref{line:abd_error})---here $\Rearr{\Delta_{k-1}}$ is a
rearrangement of $\Delta_{k-1}$, for example to expose particular
memory cells:
\vspace{-0.5em} 
$$
\Rearr{\Delta_{k-1}} \SPAND [\Delta_A] \vdash P_k \SPAND \LTRUE.
$$
The resulting symbolic heap $\Delta_A$ expresses resources
or assumptions that, in combination with $\Rearr{\Delta_{k-1}}$,
suffice to guarantee successful execution of $c_k$.
If $\Delta_A$ is $\LFALSE$ then $\Rearr{\Delta_{k-1}} \SPAND
[\Delta_A]$ is inconsistent; if this happens then the analysis will
have to find a refinement under which $(n_{k-1}, \Delta_{k-1})$ can be
proved to be unreachable.


Letting $\Delta_{k-1}' := \Rearr{\Delta_{k-1}} \SPAND \Delta_A$,
\KwRefine computes a sufficient resource for the preceding state
(line~\ref{line:back_exec}):
\vspace{-0.5em} 
$$
\Delta_{k-2}' :=
\llbracket \Cmd(n_{k-2}, n_{k-1}) \rrbracket^{\leftarrow}
(\Delta_{k-1}').
$$
If $\Delta_{k-2} \vdash \Delta_{k-2}'$ then in the step from
$\Delta_{k-2}$ to $\Delta_{k-1}$ a loss of precision has happened, and
we use the additional information in $\Delta_{k-1}'$ to refine the
abstraction (line \ref{line:refine_abs}). Otherwise, we continue
pushing backwards, and generate $\Delta'_{k-3}$, $\Delta'_{k-4}$, etc.

Eventually, \KwRefine either halts with $\Delta_{i} \vdash \Delta_{i}'$
for some $i \geq 0$, or in the last step obtains $\Delta_0 \not \vdash
\Delta_0'$. In the former case, \KwRefine invokes the procedure
\KwRefineAbstraction, passing it the forward and the backward sequence
of symbolic heaps leading to the error
(line~\ref{line:call_selectsymbol}). The symbols it generates are added
to the multiset $S$, refining the abstraction. In the latter case, we
did not find a point for refining the abstraction, so \KwRefine reports
a (still possibly spurious) error (line~\ref{line:exit_error}). Note
that in this case the computed heap $\Delta_0'$ is a sufficient
pre-condition to avoid this particular abstract counter-example.

If \KwRefine calls \KwRefineAbstraction to update the abstraction, it
discards the current node and all its descendants from the ART (line
\ref{fig:Refinement:delete}). The ART below the refinement point will
be recomputed in subsequent iterations using (possibly) stronger
invariants.

\begin{theorem}[Soundness]
If the algorithm terminates without throwing an error, the computed map
$\Inv_{\mathcal{T}}$ is an
inductive invariant not containing $\top$.
\end{theorem}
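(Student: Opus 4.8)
The statement asserts two things about the map $\Inv_{\mathcal{T}}$ produced on termination: that no control point is sent to $\top$, and that $\Inv_{\mathcal{T}}$ is inductive, i.e. $\LEMPTY \sqsubseteq \Inv_{\mathcal{T}}(\StartNode)$ and $\llbracket \Cmd(n,n')\rrbracket(\Inv_{\mathcal{T}}(n)) \sqsubseteq \Inv_{\mathcal{T}}(n')$ for every CFG edge $(n,n')$ (recall $\sqsubseteq$ is just $\vdash$). The plan is to isolate an invariant of the outer \textbf{while}-loop of Algorithm~\ref{fig:Fixpoint-Abstraction-Refinement}, prove it is preserved across both ordinary forward steps and calls to \textsf{Refine}, and then read the theorem off from the exit condition $\NodesAt_{\mathcal{T}}(k) = \emptyset$. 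I would begin with the bookkeeping identity $\Inv_{\mathcal{T}}(n) = \bigsqcup\{\Delta \mid (n,\Delta) \in T\}$, which the algorithm maintains (joins on insertion; recovery from the surviving labels after a subtree deletion). Since the root $t_0 = (\StartNode,\LEMPTY)$ is never deleted -- every subtree deleted by \textsf{Refine} is rooted at depth $\geq 1$ -- this identity immediately yields $\LEMPTY \sqsubseteq \Inv_{\mathcal{T}}(\StartNode)$, discharging the initial condition at the outset.

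The core is the following loop invariant, claimed to hold at the top of each iteration with counter $k$ and current multiset $S$: (a) no node of $\mathcal{T}$ carries the label $\top$; (b) for every $t=(n,\Delta)\in T$ with $\Level_{\mathcal{T}}(t) < k$ and every $n' \in \Succ(n)$ we have $\llbracket \Cmd(n,n')\rrbracket(\Delta) \neq \top$ and $\Abs_S(\Delta') \sqsubseteq \Inv_{\mathcal{T}}(n')$ for each $\Delta' \in \llbracket \Cmd(n,n')\rrbracket(\Delta)$; and (c) an auxiliary breadth-first clause, namely that any such coverage is witnessed by ART nodes of depth at most $\Level_{\mathcal{T}}(t)+1$. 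Clause (b) is what matters: because $\Abs_S$ is inflationary, $\Delta' \sqsubseteq \Abs_S(\Delta') \sqsubseteq \Inv_{\mathcal{T}}(n')$, and joining over the finitely many $\Delta'$ and over all $t$ labelled $n$ gives the edge-wise inductiveness for all already-processed depths. I would establish (a)--(c) by induction on iterations. The ordinary forward step is routine: processing a depth-$k$ node either finds $\Abs_S(\Delta') \sqsubseteq \Inv_{\mathcal{T}}(n')$ already, or joins $\Abs_S(\Delta')$ into $\Inv_{\mathcal{T}}(n')$; no label is ever rewritten and $\Inv_{\mathcal{T}}$ only grows while $S$ is fixed, so (b) survives the increment of $k$, (a) is untouched on this branch, and the child created at depth $k+1$ witnesses (c).

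The refinement step is the crux and, I expect, the main obstacle. When $\llbracket c \rrbracket(\Delta)=\top$ the algorithm transiently inserts a $\top$-node, but \textsf{Refine} deletes the entire $t_{\sf curr}$-subtree -- which contains that $\top$-node -- restarts the sweep at the depth $\ell$ of the checking node $t_{\sf prev}$, and replaces $S$ by some $S' \supseteq S$; this re-establishes (a). For (b) there are two threats. First, the abstraction sharpens, so the obligation becomes $\Abs_{S'}(\Delta')\sqsubseteq\Inv_{\mathcal{T}}(n')$; this is absorbed by the family-monotonicity requirement $S\subseteq S'\Rightarrow \Abs_S\preceq\Abs_{S'}$, since then $\Abs_{S'}(\Delta')\vdash\Abs_S(\Delta')\vdash\Inv_{\mathcal{T}}(n')$ for any transition still witnessed by a surviving node. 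Second, deletion can only shrink $\Inv_{\mathcal{T}}$, so the witnesses of a \emph{subsumed} transition might vanish, and a node at depth $<\ell$ will never be re-examined. This is the hard point, and it is exactly what clause (c) is there to control: \textsf{Refine} removes a subtree rooted at depth $\ell+1$ and restarts at depth $\ell$, so afterwards (b) is asserted only for nodes of depth $<\ell$, whose coverage witnesses (c) places at depth $\leq \ell < \ell+1$, i.e. strictly above the deleted region, so they survive. The delicate obligation -- where I would spend most of the effort -- is showing that (c) itself is preserved by deletion-and-restart: in particular that re-processing the retained depth-$\geq\ell$ frontier against the current $\Inv_{\mathcal{T}}$ (which still contains old deep survivors from other branches) cannot make a newly subsumed transition depend on a witness deeper than allowed by the breadth-first bound. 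I note that deletion of the offending node is also what makes progress possible at all: since $\llbracket c\rrbracket$ is independent of $S$, re-deriving that node under the sharper $\Abs_{S'}$, with a logically stronger label, is the only way the original $\top$ can be avoided.

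Finally I read off the theorem. On termination the guard fails, so $\NodesAt_{\mathcal{T}}(k)=\emptyset$ for the terminal $k$; since $\mathcal{T}$ is a tree, the absence of depth-$k$ nodes forces the absence of nodes at every depth $\geq k$, so every node of the final $\mathcal{T}$ has depth $<k$ and clause (b) applies to all of them. Hence for every edge $(n,n')$ and every disjunct $\Delta$ of $\Inv_{\mathcal{T}}(n)$ the image $\llbracket \Cmd(n,n')\rrbracket(\Delta)$ is non-$\top$ and entails $\Inv_{\mathcal{T}}(n')$; since $\llbracket \Cmd(n,n')\rrbracket$ takes the join of its per-disjunct images, $\llbracket \Cmd(n,n')\rrbracket(\Inv_{\mathcal{T}}(n)) \sqsubseteq \Inv_{\mathcal{T}}(n')$. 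Clause (a) together with the identity $\Inv_{\mathcal{T}}(n)=\bigsqcup\{\Delta\mid(n,\Delta)\in T\}$ shows each $\Inv_{\mathcal{T}}(n)\neq\top$, and the initial condition was settled at the outset; therefore $\Inv_{\mathcal{T}}$ is an inductive invariant not containing $\top$.
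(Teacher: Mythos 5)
You attempt a far more rigorous proof than the paper itself gives: the paper's entire argument is two sentences, observing that refinement only ever augments $S$ and that $S \subseteq S'$ implies $\Abs_S \preceq \Abs_{S'}$, so every recomputed label is still an inflationary over-approximation; the ART bookkeeping, the subtree deletions, and the final read-off are all left implicit as ``standard''. Your decomposition --- the identity $\Inv_{\mathcal{T}}(n) = \bigsqcup\{\Delta \mid (n,\Delta)\in T\}$, the loop invariant (a)--(c), and the conclusion from $\NodesAt_{\mathcal{T}}(k)=\emptyset$ --- is the right shape for an honest proof, and clauses (a) and (b), including your handling of the two ``threats'' via family-monotonicity and inflationariness, are sound as far as they go.

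The problem is the step you yourself flag as the one ``where I would spend most of the effort'': preservation of clause (c) under deletion-and-restart is not merely delicate, it fails for the algorithm as written, so the plan cannot be completed without modification. After a restart at depth $\ell$, the node $t_{\sf prev}$ whose child-subtree was deleted is re-processed under the refined abstraction; its new value $\Abs_{S'}(\Delta')$ is at least as strong as the deleted child's old label, and the subsumption check on line 13 is made against the \emph{current} $\Inv_{\mathcal{T}}(n')$, i.e.\ the join of \emph{all} surviving labels at $n'$ --- including survivors at depths well below $\ell+1$, such as later loop unrollings in sibling branches. Nothing prevents that check from succeeding only on account of such deep labels, at which point clause (c) is already violated for a node of depth $\ell$. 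If a subsequent call to \textsf{Refine} then restarts at some depth $m > \ell$ and deletes a subtree (rooted at depth $m+1$) containing those deep witnesses, the coverage of $t_{\sf prev}$'s transition is destroyed, the main loop never revisits depths below $m$, and the recomputed replacement labels are strictly stronger (e.g.\ $\LSEG(x,\NIL,\{v\})$ where $\LSEG(x,\NIL,\emptyset)$ stood before) and need not restore it. Closing this hole requires either a strengthened algorithmic discipline (e.g.\ restarting at the minimal depth of any node whose coverage witness was deleted, or re-checking subsumption of all retained nodes after a deletion) or a different invariant altogether; your clause (c) cannot simply be ``shown preserved''. It is worth noting that the paper's own proof does not resolve this either: the appeal to $\Abs_S \preceq \Abs_{S'}$ shows each individual label soundly over-approximates the states along its path, but says nothing about the final map being closed under the transfer functions once subtrees have been deleted. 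So your attempt is incomplete at exactly the step the paper elides rather than proves --- and in surfacing it you have identified a genuine subtlety, but not a proof.
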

\begin{proof}
Refinement in
Alg.~\ref{fig:Fixpoint-Abstraction-Refinement} is achieved by
augmenting $S$ with new elements selected by
\KwRefineAbstraction. Since $\Abs_{S} \preceq \Abs_{S'}$ for $S
\subseteq S'$, this is immediately sound.
\qed
\end{proof}

\paragraph{Refinement heuristic.}

\KwRefineAbstraction stands for some heuristic function which refines the
abstraction.  It takes two sequences, $\Path_{\sf fwd}$ and $\Path_{\sf bwd}$:
the former is a path taken by the forward analysis from the $i$-th node of the
counter-example to the error node (such that $\Delta_{i} \vdash \Delta_{i}'$ in
Alg.~\ref{fig:abstract-counter-example}), while the latter is a path sufficient
to avoid the error. \KwRefineAbstraction tries to identify symbols present in
the error-avoiding path that have been lost in the forward, overly-abstracted
path.  Conceptually, \KwRefineAbstraction can be seen as a simpler analogue of
the predicate discovery heuristics
\cite{DBLP:conf/spin/BallR01,DBLP:conf/popl/HenzingerJMS02}
(it synthesizes
only symbolic constants rather than predicates).

The heuristic in our implementation works by examining the syntactic structure
of formulae $\Delta$ and $\Delta'$ for which $\Delta \vdash \Delta'$ has been
established.
The heuristic starts by identifying congruence classes of terms occurring
in both formulae and building a tree of equalities between program variables in
each congruence class. Our separation logic prover preserves the common
syntactic parts of $\Delta$ and $\Delta'$ by explicitly recording
substitutions, ensuring we can recover a mapping between common variables
occurring in both formulae.
The heuristic then exhaustively traverses equalities in the congruence classes
for each term of $\Delta'$, and checks whether equalities can be
used to strengthen $\Delta$ without making it inconsistent. Intuitively,
because these equalities are mentioned in the calculated sufficient resource,
they will likely be significant for program correctness.  The variables in
identified equalities are then used to strengthen the abstraction.
We found this heuristic worked well in our case studies (see \S
\ref{sec:experiments}).

\mikecomment{Every time we've submitted this paper we've got complaints about
the lack of detail on \KwRefineAbstraction. Write a pseudocode implementation?}

\paragraph{Running example revisited.}


In \S \ref{sec:example} we saw a spurious error caused by
over-abstracting values in the list. To fix this, we augmented the
domain with predicates $\LSEG(\_,\_, {\{ {\tt d} \}})$, representing a
list that has at least one node with value ${\tt d}$. We now show the
refinement step in this domain.
The backward analysis begins by solving the abduction query
$$(\NODE({\tt r}, \NIL, d') \PAND {\tt x} = d \PAND {\tt t} = \NIL \PAND {\tt res} = 0) \SPAND [\Delta'_{13}] \; \vdash \; {\tt res} = 1 \SPAND [\_]$$
This yields $\Delta'_{13} = \LFALSE$ as the only solution. The
analysis then generates the following sequence of symbolic heaps (we omit some
for brevity).
Compare with the abstract counter example in
Fig.~\ref{fig:abstract-counter-example}; here $\Delta_i'$ corresponds to node
$t_i$):
\small
\begin{eqnarray*}
\Delta_{12}' & = & ({\tt t} \neq \NIL \PAND {\tt res} = 0 \PAND \LTRUE)\\
\Delta_{10}' & = & (({\tt d} = {\tt x} \PAND \LTRUE) \POR (t' \neq \NIL \PAND {\tt res} = 0 \PAND \NODE({\tt t}, t', d') \SPAND \LTRUE))\\
\Delta_{9}' & = & ((\NODE({\tt t}, t', {\tt x}) \SPAND \LTRUE) \POR (t' \neq \NIL \PAND {\tt res} = 0 \PAND \NODE({\tt t}, t', d') \SPAND \LTRUE))\\
\Delta_{8}' & = & ( (({\tt res} \neq 0 \POR {\tt t} = \NIL) \PAND \LTRUE) \POR (\NODE({\tt t}, t', {\tt x}) \SPAND \LTRUE) \POR
\\[-0.2em]
& & (t' \neq \NIL \PAND {\tt res} = 0 \PAND \NODE({\tt t}, t', d') \SPAND \LTRUE))\\
\Delta_{7}' & = & (({\tt r} = \NIL \PAND \LTRUE) \POR (\NODE({\tt r}, t', {\tt x}) \SPAND \LTRUE) \POR
(t' \neq \NIL \PAND \NODE({\tt r}, t', d') \SPAND \LTRUE))\\
\Delta_{5}' & = & ({\tt r} = t' \PAND \LTRUE)
\end{eqnarray*}
\normalsize
The algorithm stops at $\Delta_{5}'$, since $\Delta_{5} \vdash \Delta_{5}'$, and
calls \KwRefineAbstraction to augment the abstraction.  Our implementation
looks for equalities in each $\Delta'$ that can be used to
strengthen $\Delta$.  In this case, in $\Delta_{10}'$ the heuristic identifies
${\tt d} = {\tt x}$ to strengthen the corresponding $\Delta_{10} = \NODE({\tt
r}, \NIL, d') \PAND {\tt t} = {\tt r} \PAND {\tt res} = 0 \PAND {\tt d} = d'$.
Thus the heuristic selects the variable ${\tt x}$ to augment the abstraction's
multiset.

In the unrefined analysis, any predicate $\LSEG(\_,\_, S)$ will be abstracted
to $\LSEG(\_,\_,\emptyset)$ (equivalent to $\LSEG(\_,\_)$).  Adding {\tt x} to
the multiset means that predicates of the form $\LSEG(\_,\_,\{{\tt x}\})$ will be
protected from abstraction.
We restart the forward analysis from $t_5$. This time
the error is avoided, and we obtain the following abstract states:
\begin{eqnarray*}
t_{6}' & = & (l_{5}, \LSEG({\tt r}, \NIL, \{ {\tt x} \})) \ \ldots \\
t_{9}' & = & (l_{9}, \LSEG({\tt r}, \NIL, \{ {\tt x} \}) \PAND {\tt t} = {\tt r} \PAND {\tt res} = 0)
\end{eqnarray*}
Executing from $l_{9}$ to $l_{10}$ gives two possible post-states:
$\NODE({\tt r}, r', d') \SPAND \LSEG(r', \NIL, \{ {\tt x} \}) \PAND {\tt t} = {\tt r} \PAND {\tt res} = 0 \PAND {\tt d} = d'$ and
$\NODE({\tt r}, r', {\tt x}) \SPAND \LSEG(r', \NIL, \emptyset) \PAND {\tt t} = {\tt r} \PAND {\tt res} = 0 \PAND {\tt d} = {\tt x}$.
In fact, this refined abstraction suffices to prove the absence of errors on
all paths, which completes the analysis. (Other examples may need multiple
refinements)
\vspace{-0.25em} 

\section{Example Multiset-Parametric Analyses}
\label{sec:list-multiset-refinement}
\vspace{-0.25em} 

We describe in detail linked lists with value refinement
and sketch two other multiset families: linked
lists with address refinement, and sorted linked lists with value refinement.
Details for the latter two can be found in
Appendix~\ref{app:list-multiset-refinement}.  All three families are
experimentally evaluated in \S \ref{sec:experiments}.

\label{sec:mls}

Linked lists with value refinement is domain used in our running
example (\S \ref{sec:example}). List segments are instrumented with a multiset
representing the lower bound on the frequency of each variable or
constant. The abstraction function is parameterised by a multiset
controlling which symbols are abstracted. By expanding the multiset,
the preserved frequency bounds are increased, and so the abstraction is
refined.

The domain $\SHmls$ contains spatial predicates $\NODE(\cdot, \cdot, \{ d \})$
and $\LSEG(\cdot, \cdot, S)$ for all $S$ and $d \in S$. Here $x,y$ are
locations, $d$ is a data value, $S$ is a multiset:
\begin{itemize}
\item $\NODE(x, y, \{ d \})$ holds if $x$ points to a node whose
next field contains $y$ and data field contains $d$, i.e.,
$\NODE(x, y, \{ d \}) \, \eqdef \, \POINTSTO{x}{\{ {\sf next}\colon y, {\sf
data}\colon d \}}$.
\item $\LSEG(x, y, S)$ holds if $x$ points to the first node of a
non-empty list segment that ends with $y$, and for each value $d \in
{\rm dom}(S)$, there are at least $S(d)$ nodes that store
$d$.
\end{itemize}
The recursive definition of $\LSEG$ in $\SHmls$ is shown in
Fig.~\ref{fig:mlseg-definition}.  We use these equivalences as folding and
unfolding rules when solving entailment queries in $\SHmls$.

\begin{figure}[t]
\centering
\begin{tabular}{@{}l@{\hspace{0.7em}}rl@{}}
case $S=\emptyset$: & $\LSEG(e,f,\emptyset)\;\eqdef\;$ & $\NODE(e,f,\_) \POR
(\NODE(e,x',\_) \SPAND \LSEG(x',f,\emptyset))$
\\
case $S=\{ d \}$: & $\LSEG(e,f,\{ d \})\;\eqdef\;$ & $\NODE(e,f,\{ d \}) \POR
(\NODE(e,x',\{ d \}) \SPAND \LSEG(x',f,\emptyset))$\\
& & $ \POR (\NODE(e,x',\_) \SPAND \LSEG(x',f,\{ d \}))$\\
case $|S| > 1, d \in S$: & $\LSEG(e,f,S)\;\eqdef\;$ & $\NODE(e,x',\{ d \})
      \SPAND \LSEG(x',f,S \setminus \{ d \}) \POR $\\
 & & $\NODE(e,x',\_) \SPAND \LSEG(x',f,S)$
\end{tabular}
\vspace{-1em} 
\caption{Recursive definition of the $\LSEG$ predicate in domain $\SHmls$.}
\label{fig:mlseg-definition}
\vspace{-0.20em} 
\end{figure}

\begin{figure}[t]
\centering
$$
\begin{array}{r@{\;\;}c@{\;\;}l}
\Delta \PAND x'=e
& \rightsquigarrow_T^{\sf mls} &
\Delta[e/x']
\\[0.7em]
\Delta \SPAND \sigma(x',e,\_) & \rightsquigarrow_T^{\sf mls} & \Delta \SPAND \LTRUE
\hfill \textrm{if }  x' \notin \EVars(\Delta)
\\[0.7em]
\Delta \SPAND \sigma_1(x',y',\_) \SPAND \sigma_2(y',x',\_) & \rightsquigarrow_T^{\sf mls} & \Delta \SPAND \LTRUE
\hfill \textrm{if } x',y' \notin \EVars(\Delta)
\\[0.7em]
\Delta \SPAND \sigma_1(e_1,x',S_1) \SPAND \sigma_2(x',e_2,S_2) & \rightsquigarrow_T^{\sf mls} & \Delta \SPAND \LSEG(e_1,\NIL, \Project_T(S_1 \cup S_2, \Pi)) \\[0.2em]
&& \hfill \textrm{if } x' \notin \EVars(\Delta,e_1,e_2) \wedge \Delta \vdash e_2 = \NIL
\\[0.7em]
\left(
\begin{array}{@{}l@{}}
\Delta \SPAND \sigma_1(e_1,x',S_1) \SPAND
\\
\sigma_2(x',e_2,S_2) \SPAND \sigma_3(e_3,f,S_3)
\end{array}
\right)
& \rightsquigarrow_T^{\sf mls} &
\left(
\begin{array}{@{}r@{}}
\LSEG(e_1,e_2, \Project_T(S_1 \cup S_2, \Pi))
\\
\SPAND \Delta \SPAND  \sigma_3(e_3,f, S_3)
\end{array}
\right)
\\[0.9em]
& \multicolumn{2}{r}{
\;\;\;
\textrm{if } x' \notin \EVars(\Delta, e_1, e_2, e_3, f) \wedge \Delta \vdash e_2 = e_3
}
\\[0.7em]
\Delta \SPAND \LSEG(e,f,S) & \rightsquigarrow_T^{\sf mls} & \LSEG(e,f,\Project_T(S, \Pi))
\end{array}
$$
\vspace{-2em} 
\caption{Abstract reduction system $\rightsquigarrow_T^{\sf mls}$
defining the abstraction function $\Abs_T^{\sf mls}$.
(In the rules, $\sigma, \sigma_i$ range over $\{ \NODE, \LSEG \}$.
The pure assumption $\Pi$ is supplied by the analysis.)}
\label{fig:mls-abs-definition}
\vspace{-0.35em} 
\end{figure}

\paragraph{Abstraction.}
Let $T$ be a finite multiset of program variables and constants. In
Fig.~\ref{fig:mls-abs-definition}, we define a parametric reduction
system $\rightsquigarrow_T^{\sf mls}$, which rewrites symbolic heaps
from $\SHmls$ to canonical heaps whose data and multiset values are
congruent to elements of $T$. Except for the final rule, the relation
$\rightsquigarrow_T^{\sf mls}$ resembles the abstraction for plain
linked lists developed by Distefano et al.~\cite[table
2]{DBLP:conf/tacas/DistefanoOY06}.

The final reduction rule replaces every predicate $\LSEG(e,f,S)$ with
the bounded predicate $\LSEG(e,f,\Project_T(S,\Pi))$. The operator
$\Project_T$ extracts the maximal subset of $S$ such that no element
appears more frequently than it does in $T$ (modulo given pure
assumptions $\Pi$).
Let $\sim_{\Pi}$ be the equivalence relation $x \sim_{\Pi} y \eqdef \Pi
\vdash x = y$. Fix a representative for each equivalence class of
$\sim_{\Pi}$, and for a multiset $S$, denote by $S/_{\Pi}$ the multiset
of $\sim_{\Pi}$-representatives where the multiplicity of a
representative $x$ is $\sum_{x \sim_{\Pi} y} S(y)$. Writing $x\cdot n$
for a multiset element $x$ occurring with multiplicity $n$, we define
$\Project_T$ by
$$
\Project_T(S,\Pi) \; \eqdef \ \{ x \cdot n \mid x \cdot k' \in S/_{\Pi}
\PAND \exists d' \cdot m' \in T/_{\Pi} \ldotp \Pi \vdash x = d' \PAND
n=\min(k',m') \}.
$$
As $\rightsquigarrow_T^{\sf mls}$ has no infinite reduction sequences, it gives
rise to an abstraction function $\Abs_T^{\sf mls}$ by exhaustively
applying the rules until none apply.

\begin{lemma}[Finiteness]
If $T$ is finite and there are only finitely many program variables
then the domain
$\CSH_T^{\sf mls} \eqdef \{ \Delta \mid \Delta \not \vdash \LFALSE
\PAND \Delta \not \rightsquigarrow_T^{\sf mls} \}$ is finite.
\end{lemma}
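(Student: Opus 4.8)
The plan is to show that every $\Delta \in \CSH_T^{\sf mls}$ is determined, up to renaming of existential variables and reordering of $*$-conjuncts (and up to the standard normalisation of the pure part), by a bounded amount of data, each component of which ranges over a finite set. Write $\Delta = \Pi \PAND \Sigma$ and model $\Sigma$ as a directed graph whose vertices are the location-expressions occurring as a first or second argument of a spatial predicate (program variables, constants such as $\NIL$, and existential variables), with one edge $e \to f$ per predicate $\sigma(e,f,\cdot) \in \{\NODE, \LSEG\}$. Consistency ($\Delta \not\vdash \LFALSE$) forces the out-degree of every vertex to be at most $1$: both $\NODE$ and $\LSEG$ allocate their source cell, so two predicates with the same source (or with sources equated by $\Pi$) make the $*$-conjunction inconsistent. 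Hence the number of spatial predicates equals the number of distinct source vertices modulo $\sim_{\Pi}$, and everything reduces to bounding the number of existential location-vertices.

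Next I would exploit irreducibility under $\rightsquigarrow_T^{\sf mls}$. The first rule eliminates every equality $x' = e$, so distinct surviving existential names denote genuinely distinct vertices. The two garbage rules ensure that no existential vertex is an unreferenced source and no unreferenced two-cycle survives; applying the merge rules beforehand, any longer chain or cycle of unreferenced, in-degree-$1$/out-degree-$1$ existentials collapses (each such vertex's successor is itself the source of the next predicate, which supplies the anchoring witness $e_2 = e_3$ required by the fifth rule, or $e_2 = \NIL$ for the fourth), so it is eventually removed. Consequently every surviving existential vertex is reachable from a program variable and is a genuine branch/anchor point: it has in-degree $\ge 2$, or it occurs in the pure part, or its successor fails the anchoring test. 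A standard counting argument for functional graphs (out-degree $\le 1$) then bounds the number of such distinguished vertices linearly in the number of roots, i.e. in the finitely many program variables and constants. This is exactly the skeleton-finiteness of Distefano et al. for plain lists; the multiset annotations play no part in it.

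It then remains to bound the decorations. By the final reduction rule every $\LSEG$ in a normal form carries an annotation of the form $\Project_T(S,\Pi)$, which by definition of $\Project_T$ is a sub-multiset of $T$ taken modulo $\sim_{\Pi}$; since $T$ is finite there are only finitely many such annotations. Each $\NODE$ carries a single data slot that is either anonymous ($\_$, irrelevant up to renaming) or a value congruent to a program variable, constant, or element of $T$, again finitely many choices, and there is at most one such slot per (bounded) predicate. Finally, with the set of occurring expressions bounded by the previous steps, there are only finitely many atomic pure facts $e_1 = e_2$, $e_1 \neq e_2$, and $p(\bar e)$; taking the pure part in its canonical saturated form leaves finitely many possibilities for $\Pi$. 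A normal form is therefore fixed by a finite skeleton, a finite annotation of its edges, and a finite pure part, so $\CSH_T^{\sf mls}$ is finite.

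The main obstacle is the existential-vertex bound of the second paragraph: I must check carefully that the merge rules really do collapse every maximal chain and cycle of internal existentials to bounded length, which hinges on the side-condition $\Delta \vdash e_2 = \NIL$ or $\Delta \vdash e_2 = e_3$ being met at each internal step. It is, because an internal successor is always the source of the following predicate and so anchors the preceding merge, and a cycle collapses to a two-cycle that the third rule then discards; the fiddly cases are the endpoints of chains and existentials that survive only because they are pinned by a disequality or pure predicate rather than by the graph, which must still be counted among the distinguished vertices. A secondary point, needed for finiteness on the nose rather than merely up to logical equivalence, is to fix once and for all a canonical representation (renaming of existentials, ordering of conjuncts, saturated pure part), and to read every congruence-sensitive condition modulo $\sim_{\Pi}$, consistent with the $\Pi$-dependence of both the abstraction and $\Project_T$.
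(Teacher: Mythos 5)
The paper offers no proof of this lemma at all---it is stated bare, with the intended justification implicit in the earlier remark that $\rightsquigarrow_T^{\sf mls}$ ``resembles the abstraction for plain linked lists'' of Distefano et al. Your proposal reconstructs exactly that intended argument, and in more detail than the paper anywhere supplies: bound the spatial skeleton of an irreducible consistent heap (out-degree at most one because both $\NODE$ and $\LSEG$ allocate their source, so the number of predicates is governed by the surviving existential vertices; irreducibility forces each survivor to be shared, pinned by $\Pi$, dangling, or the single pre-dangling endpoint of a collapsed chain, giving a bound linear in the program variables), then observe that the decorations are finite because the last rule of Fig.~\ref{fig:mls-abs-definition} caps every $\LSEG$ annotation to a sub-multiset of the finite $T$ modulo $\sim_{\Pi}$. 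Your handling of the fiddly cases is correct: for a chain ending in a non-$\NIL$, unallocated expression the anchoring side-condition $\Delta \vdash e_2 = e_3$ fails only at the final merge, so exactly one internal existential survives per such chain, and garbage cycles do collapse to the two-cycle discharged by the third rule.

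Two caveats, both really defects of the lemma as stated rather than of your reasoning, but your proof papers over them at one point. First, your claim that ``with the set of occurring expressions bounded by the previous steps, there are only finitely many atomic pure facts'' is unwarranted: no rule of Fig.~\ref{fig:mls-abs-definition} eliminates pure facts over existentials occurring \emph{only} in $\Pi$ (the first rule removes equalities, nothing removes disequalities or pure predicates), so heaps such as $\NODE(r,\NIL,\_) \PAND x_1' \neq x_2' \PAND x_3' \neq x_4' \PAND \cdots$ are all consistent and irreducible, and $\CSH_T^{\sf mls}$ as literally defined is infinite. You need either an additional junk-elimination rule, the well-formedness convention $\EVars(\Pi) \subseteq \EVars(\Sigma)$ (which the analysis in fact maintains on all heaps it generates), or finiteness only up to logical equivalence; your appeal to a ``canonical saturated form'' does not by itself bound the number of pure-only existentials. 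Second, the $\NODE$ data slots and the expression vocabulary are only finite if the set of \emph{constants} is finite as well as the program variables---$\Project_T$ never touches node data, so $\NODE(r,\NIL,\{c\})$ for unboundedly many constants $c$ would give infinitely many irreducible heaps. You assume this silently (``program variables, constants, or element of $T$, again finitely many choices''); it is the reasonable reading, matching the lemma's evident intent of constants drawn from the program text, but it should be stated, since the lemma's hypothesis mentions only variables.
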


\begin{lemma}[Soundness]
As $\Delta \rightsquigarrow_T^{\sf mls} \Delta'$ implies $\Delta \vdash
\Delta'$, $\Abs_T^{\sf mls} \colon \SH \to \CSH_T^{\sf mls}$ is a sound
abstraction function.
\end{lemma}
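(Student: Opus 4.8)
The plan is to show that $\Abs_T^{\sf mls}$ meets the \emph{soundness} requirement of \S\ref{sec:forward-analysis}, i.e.\ that it is inflationary with respect to $\vdash$: for every $\Delta \in \SHmls$ we must establish $\Delta \vdash \Abs_T^{\sf mls}(\Delta)$, and check that the result lands in the finite codomain $\CSH_T^{\sf mls}$. Recall that $\Abs_T^{\sf mls}(\Delta)$ is by definition the normal form obtained by applying $\rightsquigarrow_T^{\sf mls}$ exhaustively; since this reduction has no infinite sequences, there is a finite chain $\Delta = \Delta_0 \rightsquigarrow_T^{\sf mls} \Delta_1 \rightsquigarrow_T^{\sf mls} \cdots \rightsquigarrow_T^{\sf mls} \Delta_n = \Abs_T^{\sf mls}(\Delta)$ ending in a term to which no rule applies. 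Thus the whole statement reduces to the single-step claim advertised in the lemma, together with a transitivity argument.

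The real content is verifying the single-step claim: whenever $\Delta \rightsquigarrow_T^{\sf mls} \Delta'$, we have $\Delta \vdash \Delta'$. I would prove this by case analysis over the six rule schemas of Fig.~\ref{fig:mls-abs-definition}, using frame-monotonicity of $\SPAND$ so that it suffices to check each rule on the fragment it rewrites, in its surrounding context $\Delta$. The equality-elimination rule is sound by existential instantiation, since $\exists x'.\,(\Delta \PAND x' = e)$ entails $\Delta[e/x']$. The two garbage-collection rules weaken a spatial predicate to $\LTRUE$; here $\sigma(x',e,\_) \vdash \LTRUE$ (and likewise for the two-node cycle), while the side conditions $x' \notin \EVars(\cdots)$ guarantee that the discarded existentials occur nowhere else, so the quantifiers distribute soundly over $\SPAND$.

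The main obstacle lies in the two list-segment folding rules together with the final projection rule, where the multiset bookkeeping must be shown to \emph{weaken} rather than strengthen the assertion. For the folding rules I would first establish a concatenation entailment, $\sigma_1(e_1,x',S_1) \SPAND \sigma_2(x',e_2,S_2) \vdash \LSEG(e_1,e_2,S_1 \cup S_2)$, directly from the recursive definition of $\LSEG$ in Fig.~\ref{fig:mlseg-definition}: appending a segment carrying at least the occurrences $S_1$ to one carrying at least $S_2$ yields a segment carrying at least $S_1 \cup S_2$. I would then combine this with monotonicity of $\LSEG$ in its multiset argument, namely $\LSEG(e,f,S) \vdash \LSEG(e,f,S')$ whenever $S' \subseteq S$ (demanding fewer occurrences is logically weaker). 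The delicate point is that the rules record $\Project_T(S_1 \cup S_2, \Pi)$ rather than a literal sub-multiset, so I must check that $\Project_T(-,\Pi)$ always returns a sub-multiset of its argument modulo the congruence $\sim_{\Pi}$ induced by the supplied pure part $\Pi$: collapsing $\Pi$-equal data values and truncating each multiplicity to its value in $T$ can only lose lower-bound information. This also reduces the final rule~to the same monotonicity fact. I expect the $\sim_{\Pi}$-reasoning --- justifying that the representative-and-floor construction underlying $\Project_T$ respects $\vdash$ --- to be the most technical step.

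Finally, I would assemble the pieces. By transitivity of $\vdash$ along the reduction chain, $\Delta = \Delta_0 \vdash \cdots \vdash \Delta_n = \Abs_T^{\sf mls}(\Delta)$, which is exactly the inflationary property. Since $\Delta \in \SHmls$ is consistent and entailment preserves satisfiability, $\Abs_T^{\sf mls}(\Delta) \not\vdash \LFALSE$; being a normal form it also satisfies $\Delta \not\rightsquigarrow_T^{\sf mls}$, so it lies in $\CSH_T^{\sf mls}$, which is finite by the preceding Finiteness lemma. Hence $\Abs_T^{\sf mls} \colon \SHmls \to \CSH_T^{\sf mls}$ is a sound abstraction function.
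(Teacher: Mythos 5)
Your proposal is correct and takes essentially the same route as the paper: the paper offers no separate proof, instead asserting the single-step entailment ($\Delta \rightsquigarrow_T^{\sf mls} \Delta'$ implies $\Delta \vdash \Delta'$) in the lemma statement itself and concluding soundness from the fact that $\Abs_T^{\sf mls}$ is the exhaustive (terminating) application of these rules. Your per-rule case analysis---substitution, garbage weakening, segment concatenation plus monotonicity of $\LSEG$ in its multiset argument, and $\Project_T$ returning a sub-multiset modulo $\sim_{\Pi}$---simply fills in, correctly, the details the paper leaves implicit.
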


\begin{lemma}[Monotonicity]
If $T_1 \subseteq T_2$ then $\Abs_{T_1}^{\sf mls} \preceq
\Abs_{T_2}^{\sf mls}$.
\end{lemma}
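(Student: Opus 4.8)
The statement $\Abs_{T_1}^{\sf mls} \preceq \Abs_{T_2}^{\sf mls}$ unpacks, via the definition of analysis comparison, into two obligations: the domain inclusion $\CSH_{T_1}^{\sf mls} \subseteq \CSH_{T_2}^{\sf mls}$, and the entailment $\Abs_{T_2}^{\sf mls}(\Delta) \vdash \Abs_{T_1}^{\sf mls}(\Delta)$ for every $\Delta \in \SH$. Both rest on one elementary fact about projection, which I would prove first and reuse throughout: since $T_1 \subseteq T_2$ gives $T_1(d) \leq T_2(d)$ for every value $d$, the $\min$ in the definition of $\Project$ yields $\Project_{T_1}(S,\Pi) \subseteq \Project_{T_2}(S,\Pi) \subseteq S$ for all $S$, and moreover $\Project_{T_1}(\Project_{T_2}(S,\Pi),\Pi) = \Project_{T_1}(S,\Pi)$ (composing $\min$ against $T_2$ then against $T_1$ collapses to a single $\min$ against $T_1$). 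Idempotence of $\Project_{T_1}$ is the special case $T_1 = T_2$.

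For the domain inclusion I would observe that a consistent heap lies in $\CSH_T^{\sf mls}$ exactly when no rule of $\rightsquigarrow_T^{\sf mls}$ fires. Every rule except the final projection rule is $T$-independent, and the final rule fails to fire on a segment $\LSEG(e,f,S)$ precisely when $\Project_T(S,\Pi) = S$. If $\Delta \in \CSH_{T_1}^{\sf mls}$, then each of its segments satisfies $\Project_{T_1}(S,\Pi) = S$; sandwiching $\Project_{T_1}(S,\Pi) \subseteq \Project_{T_2}(S,\Pi) \subseteq S$ forces $\Project_{T_2}(S,\Pi) = S$ as well, so no $\rightsquigarrow_{T_2}^{\sf mls}$-rule fires and $\Delta \in \CSH_{T_2}^{\sf mls}$.

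For the entailment my plan is to exhibit a reduction of $\Abs_{T_2}^{\sf mls}(\Delta)$ under $\rightsquigarrow_{T_1}^{\sf mls}$ to $\Abs_{T_1}^{\sf mls}(\Delta)$ using only the final projection rule, and then conclude by the Soundness lemma (each step $\Gamma \rightsquigarrow_{T_1}^{\sf mls} \Gamma'$ gives $\Gamma \vdash \Gamma'$) together with transitivity of $\vdash$. Writing $\Delta_2 := \Abs_{T_2}^{\sf mls}(\Delta)$: because $\Delta_2$ is $T_2$-irreducible and the non-final rules are $T$-independent, the only $\rightsquigarrow_{T_1}^{\sf mls}$-reductions available on $\Delta_2$ are applications of the final rule, each replacing a segment multiset $S$ by $\Project_{T_1}(S,\Pi) \subseteq S$; by idempotence of $\Project_{T_1}$ the result is $T_1$-irreducible, hence equals $\Abs_{T_1}^{\sf mls}(\Delta_2)$.

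The remaining step, which I expect to be the main obstacle, is to identify $\Abs_{T_1}^{\sf mls}(\Delta_2)$ with $\Abs_{T_1}^{\sf mls}(\Delta)$. Here the projection-composition identity does the real work: a reduction of $\Delta$ to $\Delta_2$ under $T_2$ differs from a reduction under $T_1$ only in that the projections baked into the folding rules and the final rule use the $T_2$-bound rather than the $T_1$-bound, and precomposing any such $T_2$-projection with a subsequent $T_1$-projection recovers exactly the $T_1$-projection. Since $\rightsquigarrow_{T_1}^{\sf mls}$ is confluent (which is precisely what makes $\Abs_{T_1}^{\sf mls}$ a well-defined function), normal forms are unique, so $\Abs_{T_1}^{\sf mls}(\Delta_2) = \Abs_{T_1}^{\sf mls}(\Delta)$ and the entailment follows. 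The delicate bookkeeping is in justifying that the $T$-dependent projections never change which structural rules (folding, garbage elimination, equality substitution) are applicable, so that the $T_1$- and $T_2$-reductions track the same spatial shape and differ only in their multiset annotations.
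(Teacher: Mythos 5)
The paper states this lemma without any proof at all (unlike Soundness, whose one-line justification is embedded in its statement), so there is no reference argument to match yours against; the comparison is against the definitions themselves. Measured that way, your skeleton is right: the two obligations you extract, $\CSH_{T_1}^{\sf mls} \subseteq \CSH_{T_2}^{\sf mls}$ and $\Abs_{T_2}^{\sf mls}(\Delta) \vdash \Abs_{T_1}^{\sf mls}(\Delta)$, are exactly what $\preceq$ unpacks to; your sandwich argument for the domain inclusion is sound, because the non-final rules of Fig.~\ref{fig:mls-abs-definition} have $T$-independent applicability conditions ($T$ only enters their right-hand sides), and $\Project_{T_1}(S,\Pi) \subseteq \Project_{T_2}(S,\Pi) \subseteq S$ forces the final rule to be disabled under $T_2$ wherever it is disabled under $T_1$; and routing the entailment through the Soundness lemma via a $\rightsquigarrow_{T_1}^{\sf mls}$-reduction of $\Abs_{T_2}^{\sf mls}(\Delta)$ is a clean plan.

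Two repairs are needed, both fillable. First, your identification $\Abs_{T_1}^{\sf mls}(\Abs_{T_2}^{\sf mls}(\Delta)) = \Abs_{T_1}^{\sf mls}(\Delta)$ leans on confluence of $\rightsquigarrow_{T_1}^{\sf mls}$, which the paper never claims: it says only that rules are applied exhaustively, so $\Abs_T^{\sf mls}$ may be well-defined by a fixed application strategy rather than by unique normal forms, and confluence of such merge systems is not obvious. The safer version is the simulation you sketch in your last paragraph, made primary: run the \emph{same} strategy under $T_1$ and $T_2$ in parallel; since applicability never inspects the multiset annotations or $T$, both runs fire the same rules at the same redexes, producing heaps with identical spatial and pure structure that differ only in annotations, after which only final-rule steps (trivially confluent, by idempotence of $\Project_{T_1}$) are needed. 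Note also that you only need $\Abs_{T_2}^{\sf mls}(\Delta) \vdash \Abs_{T_1}^{\sf mls}(\Delta)$, not equality of normal forms. Second, the annotation bookkeeping needs more than the composition identity $\Project_{T_1} \circ \Project_{T_2} = \Project_{T_1}$: along a run, a $T_2$-projected annotation is unioned with fresh multisets before being projected again (fourth and fifth rules), so the invariant you must carry is the union-compatible identity $\Project_{T_1}(\Project_{T_2}(S,\Pi) \cup S', \Pi) = \Project_{T_1}(S \cup S', \Pi)$. This does hold, since $\min(\min(k,m_2)+j,\, m_1) = \min(k+j,\, m_1)$ whenever $m_1 \leq m_2$, and any element unmatched in $T_2$ is also unmatched in $T_1$; but as literally written your plain composition identity is insufficient for multi-step reductions. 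With that strengthened lemma, plus the observation (which you correctly flag as the delicate point) that the first rule's substitutions preserve the $\sim_{\Pi}$-congruence on surviving terms so the same $\Pi$ serves both runs at each step, your argument goes through.
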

\vspace{-0.75em} 

\subsection{Linked Lists with Address Refinement}
\label{sec:ls}
\vspace{-0.25em} 

\begin{wrapfigure}{r}{4.5cm}
\vspace{-1.55cm}
\begin{lstlisting}
  int remove(Node x) {
    ... // (border cases)
    p = hd;  c = p->next;
    while (c!=nil) {
      if (c==x) {
        p->next = c->next;
        return 1;
      }
      p = c; c = p->next;
    }
    return 0;
  }
\end{lstlisting}
\vspace{-1cm}
\end{wrapfigure}
Rather than preserving certain values in the list, we might need to
preserve nodes at particular \emph{addresses}. For example, to remove
a node from a linked list we might use the procedure shown on the
right.  Given pre-condition $\LSEG({\tt r}, {\tt x}) \SPAND \NODE({\tt
x}, n', \_) \SPAND \LSEG(n', \NIL)$ the procedure will return
\lstinline$1$. However, the standard list abstraction will forget the
existence of the node pointed to by {\tt x}, making this impossible to
prove.

To preserve information of this kind, we combine the domain of linked
lists, $\SHrls$, with a multiset refinement that preserves particular
addresses. Because node addresses are unique, the domain contains just
$\LSEG$ and $\NODE$ predicates, rather than predicates instrumented
with multisets. The reduction system $\rightsquigarrow_T^{\sf rls}$
protects addresses in the multiset $T$ from abstraction. As before,
refinement consists of adding new addresses to the multiset.
\vspace{-0.5em} 

\subsection{Sorted Linked Lists with Value Refinement}
\vspace{-0.25em} 

We can apply the idea of value refinement to different basic domains,
allowing us to deal with examples where different data-structure
invariants are needed.
In our third analysis family, we refine on the existence of particular
values in a sorted list \emph{interval}, rather than a simple segment.
The domain $\SHsls$ contains the predicate $\LSEGless$, parameterised
by an interval of the form $[\alpha, \beta \rangle$, which stores the
bounds of the values in the list, and a multiset $S$, which bounds on
the frequency of particular values in the interval.  The abstraction
function $\rightsquigarrow_T^{\sf sls}$ works in a similar way to
$\SHmls$: the operator $\Project_T$ caps the frequency set $S$,
limiting the number of values that are preserved by abstraction.
\vspace{-0.25em} 

\section{Experimental Evaluation}
\label{sec:experiments}
\vspace{-0.25em} 

We implemented
Algorithm \ref{fig:Fixpoint-Abstraction-Refinement}
and abstract domains $\SHmls$, $\SHrls$
and $\SHsls$ in the separation logic tool ${\sf coreStar}$ \cite{BD+11}.
Aside from superficial tweaks,
we used an identical algorithm and $\KwRefineAbstraction$ heuristic for
all of our case studies.
We used client-oriented specifications
\cite{DBLP:conf/vstte/HaydenMHFF12} describing
datastructures from Redis (a key-value store), Azureus (a BitTorrent
client) and FreeRTOS (real time operating system). Table
\ref{fig:experiments} shows results.

\begin{table}[tb]
\vspace{-1em} 
\begin{center}
\begin{tabular}{c@{\;\;\;}lc@{\hspace{0.1cm}}c@{\hspace{0.1cm}}c@{\hspace{0.1cm}}c@{\hspace{0.1cm}}c@{\hspace{0.1cm}}c}
No & Benchmark & Result & Dom & $|T|$ & \#Refn & $|$ART$|$ & \#Quer \\
\hline
\multicolumn{8}{l}{
\textbf{Set}
}\\
1 & ${\sf add}(x)$--$\ast$--${\sf mem}(x)$ & $\checkmark$ & $\SHmls$ & 1 & 1 & 83 & 162 \\
2 & $\ast$--${\sf add}(x)$--$\ast_{\neg {\sf del}}$--${\sf mem}(x)$ & $\checkmark$ & $\SHmls$ & 1 & 1 & 104 & 193 \\
3 & $\ast_{\neg {\sf del}}$--${\sf mem}(x)$--$\ast_{\neg x}$--${\sf mem}(x)$ & $\checkmark$ & $\SHmls$ & 1 & 1 & 165 & 280 \\
4 & $\ast_{{\sf add}(x)}$--${\sf all\_equal\_to\_}x$ & $\infty$ & $\SHmls$  \\
5 & $\ast_{{\sf add}(x)}$--${\sf all\_sorted}$ & $\top$ & $\SHmls$  \\
\hline
\multicolumn{8}{l}{
\textbf{Multiset}
}\\
6 & ${\sf add}(x)$--${\sf add}(x)$--${\sf del}(x)$--${\sf mem}(x)$ & $\checkmark$ & $\SHmls$ & 2 & 1 & 67 & 91 \\
7 & $\ast$--${\sf add}(x)$--$\ast_{\neg {\sf del}}$--${\sf mem}(x)$ & $\checkmark$ & $\SHmls$ & 1 & 1 & 112 & 205 \\
8 & $\ast_{\neg {\sf del}}$--${\sf mem}(x)$--$\ast_{\neg x}$--${\sf mem}(x)$ & $\checkmark$ & $\SHmls$ & 1 & 1 & 171 & 312 \\
9 & $\ast$--${\sf add}(x)$--$\ast_{\neg {\sf del}}$--${\sf add}(x)$--$\ast_{\neg {\sf del}}$--${\sf del}(x)$--${\sf mem}(x)$ & $\checkmark$ & $\SHmls$ & 2 & 2 & 219 & 458 \\
\hline
\multicolumn{8}{l}{
\textbf{Map}
}\\
10 & $\ast$--${\sf put}(k,v)$--$\ast_{\neg k}$--${\sf get}(k)$ & $\checkmark$ & $\SHmls$ & 1 & 1 & 118 & 215 \\
11 & $\ast$--${\sf rem}(k)$--${\sf bound}(k)$ & $\checkmark$ & $\SHmls$ & 1 & 1 & 92 & 168 \\
\hline
\multicolumn{8}{l}{
\textbf{ByteBufferPool}
}\\
12 & Property 1 & $\checkmark$ & $\SHrls$ & 1 &     1 & 154 & 231 \\
13 & Property 2 & $\checkmark$ & $\SHrls$ & 2 (1) & 1 & 189 & 270 \\
14 & Property 3 & $\checkmark$ & $\SHrls$ & 6 (2) & 4 & 316 & 511 \\
\hline
\multicolumn{8}{l}{
\textbf{FreeRTOS list}
}\\
15 & Property 4 & $\checkmark$ & $\SHmls$ & 1 & 1 & 91 & 158 \\
16 & Property 5 & $\checkmark$ & $\SHsls$ & 6 & 5 & 425 & 971
\end{tabular}
\end{center}
\caption{Experimental Results. Benchmarks
verified by the analysis are marked with $\checkmark$,
those where it threw
an error with $\top$ and those where it did not terminate
with $\infty$. \emph{Dom} is the domain used for the analysis,
\emph{$|T|$} is the size of the multiset $T$ after the final refinement
(number in parentheses denotes the size of the minimal sufficient $T$),
\emph{\#Refn} is the no. of refinement steps, \emph{$|$ART$|$} the
no. symbolic states in the final ART, and \emph{\#Quer}
the no. queries sent to the prover.}
\label{fig:experiments}
\vspace{-2em} 
\end{table}

\smallskip
\noindent
\emph{\textbf{Set}, \textbf{Multiset} and \textbf{Map}.}
%
These are synthetic benchmarks based on specifications
for Redis~\cite{DBLP:conf/vstte/HaydenMHFF12}. They
check various aspects of functional correctness---for example, that
following deletion a key is no longer bound in the store. Furthermore,
we check these specifications across dynamic updates which may modify
the data structures involved---for example, by removing duplicate
bindings to optimize for space usage.

The \textbf{Set} and \textbf{Multiset} benchmarks apply operations
${\sf add}$ (add an element), ${\sf del}$ (delete an element) and ${\sf
mem}$ (test for membership) to a list-based set (multiset,
respectively) in the order indicated by the benchmark name. The symbols
$\ast$, $\ast_{\neg {\sf del}}$ and $\ast_{\neg x}$ respectively denote
applying all operations any number of times with any argument, all
operations except ${\sf del}$, and all operations but excluding $x$ as
an argument. For $\textbf{Map}$ benchmarks the operations ${\sf put}$
(insert a key-value pair), ${\sf get}$ (retrieve a value for the given
key), ${\sf rem}$ (remove a key with the associated value) and ${\sf
bound}$ (check if the key is bound) are to a list-based map. For
benchmarks 1,2,6,7,9,10 the goal was to prove that the last operation
returns ${\sf true}$; for benchmark 11 that it returns
${\sf false}$; and, for benchmarks 3 and 8 that the two ${\sf mem}$
operations return the same value. Benchmark 4 illustrates a universal
property that causes our analysis in $\SHmls$ to loop forever by adding
$x$ to $T$ at each refinement step. Benchmark 5 is a universal property
for which our analysis in $\SHmls$ fails to find an inductive invariant due to
the ordering predicate (using $\SHsls$ on the same benchmark
loops forever).
\vspace{-0.25em} 

\smallskip
\noindent
\emph{\textbf{ByteBufferPool}.}
Azureus uses a pool of ByteBuffer objects to store results of network
transfers.  In early versions, free buffers in this pool were
identified by setting the buffer position to a sentinel value.  The
\textbf{ByteBufferPool} benchmarks check properties of this pool.
Property 1 checks that if the pool is full and a buffer is freed, that
just-freed buffer is returned the next time a buffer is requested.
Property 2 checks that if the pool has some number of free buffers,
then no new buffers are allocated when a buffer is requested.  Property
3 checks that if the pool has at least two free buffers, then two
buffer requests can be serviced without allocating new buffers.
\vspace{-0.25em} 

\smallskip
\noindent
\emph{\textbf{FreeRTOS list}.}
This is a sorted cyclic list with a sentinel node, used task management in
the scheduler. The value of the sentinel marks the end of the
list---for instance, on task insertion the list is
traversed to find the right insertion point and the guard for that
iteration is the sentinel value. To check correctness of the shape
after insertion (Property 4) it suffices to remember that the sentinel
value is in the list. To check that tasks are also correctly sorted
according to priorities (Property 5) we need to keep track of list
sortedness and all possible priorities as splitting points.
\vspace{-0.25em} 


\section{Conclusions and Limitations}
\label{sec:abs_ref_properties}
\vspace{-0.25em} 

We have presented a CEGAR-like abstraction refinement scheme for separation
logic analyses, aimed at refining existential properties of programs, in
which we want to track some elements of a data structure more precisely
than others.

Our prototype tool is built on ${\sf coreStar}$~\cite{BD+11}, and we expect our approach
would combine well with other separation logic tools,
e.g.~\cite{DBLP:journals/jacm/CalcagnoDOY11,DBLP:conf/cav/BerdineCI11}.
In particular, abduction is known to work well in an inter-procedural
setting~\cite{DBLP:journals/jacm/CalcagnoDOY11} and we thus believe our
approach could be made inter-procedural without substantial further research.

Minimizing incompleteness is more challenging, as
without further assumptions
Algorithm~\ref{fig:Fixpoint-Abstraction-Refinement} might diverge, or
fail to recognize a spurious coun\-ter-example as infeasible. If
the forward transfer function is \emph{exact} (i.e., returns the
strongest post-condition) and the backward transfer function is
\emph{precise} (i.e., for any $c$ and $\Delta$, $\llbracket c
\rrbracket (\llbracket c \rrbracket^{\leftarrow} (\Delta)) \vdash
\Delta$) then the algorithm makes progress relative to the refinement
heuristic. Intuitively, if $\KwRefineAbstraction$ always picks a symbol
such that the refined abstraction rules out the spurious
counter-example, then that counter-example will never reappear in
subsequent iterations. However, we are skeptical that our current
heuristic satisfies this condition. For a more formal discussion, see
Appendix~\ref{sec:progress_completeness}.

Note that Berdine et al.~\cite{berdine-cav2012} similarly do not establish
progress for their analysis.
Predicate abstraction techniques that do not
\textit{a priori} fix the set of predicates have the same issue, as do
interpolation-based procedures that do not constrain the language of acceptable
interpolants.  In both cases, the restrictions that ensure termination also
limit the set of programs that can be proved correct.

\stephencomment{Do we need citations here?}

\bibliographystyle{abbrv}
\bibliography{references}

\newpage
\appendix

\section{Relative Progress and Completeness}
\label{sec:progress_completeness}

Without further assumptions, the abstraction refinement algorithm might
diverge, or report a spurious counter-example which is in fact not
feasible.  The following idealised assumptions suffice to ensure
progress and completeness (we are skeptical that condition (c) holds
for our current realisation of the analysis---see below).
\begin{itemize}
\item[(a)] The forward transfer function is exact (i.e.,
    $\llbracket \cdot \rrbracket$-image is the strongest
    post-condition in the given abstract domain).

\item[(b)] The backward transfer function is precise (so we are
    able to identify spurious counter-examples). Formally, for any
    $c$ and $\Delta$, we have $\llbracket c \rrbracket (\llbracket
    c \rrbracket^{\leftarrow} (\Delta)) \vdash \Delta$.

\item[(c)] When called with a $(\Path_{\sf fwd}, \Path_{\sf bwd})$-pair
    of the counter-example and the path sufficient to avoid the error, the
    procedure call \KwRefineAbstraction{$\Path_{\sf fwd}, \Path_{\sf bwd}$}
    picks symbols $A$ for augmenting $S$ such that the spurious
    counter-example ending with $\Path_{\sf fwd}$ is eliminated by the
    abstraction $\Abs_{S \cup A}$.
\end{itemize}
Alg.~\ref{fig:Fixpoint-Abstraction-Refinement} then makes progress
by ensuring that a counter-example, once eliminated, remains eliminated in
all subsequent iterations.

\begin{theorem}[Relative progress]
Let $\gamma_j$ be the counter-example processed in the $j$-th refinement
step. Then for all $j \geq 1$, $|\gamma_j| \leq |\gamma_{j+1}|$, where
$| \cdot |$ denotes the length of the counter-example. In addition, if
$\gamma_j$ is processed with value $k$ in the while-loop on line
\ref{fig:Fixpoint:main-loop} of
Alg.~\ref{fig:Fixpoint-Abstraction-Refinement} then the program being
analysed has no counter-examples of length less than $k$.
\end{theorem}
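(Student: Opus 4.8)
The plan is to prove the two assertions in reverse order, since the length bound rests on the minimality statement. Throughout I identify the length $|\gamma|$ of an abstract counter-example with the depth of its terminal $\top$-node in the ART, and I write $\Abs_{S_{j-1}}$ for the abstraction in force while $\gamma_j$ is processed and $\Abs_{S_j}$ (with $S_{j-1} \subseteq S_j$) for the one obtained by refining on $\gamma_j$.

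\emph{Minimality.} First I would establish the second assertion as a breadth-first invariant of Algorithm~\ref{fig:Fixpoint-Abstraction-Refinement}: whenever the outer loop is entered with value $k$, no node of the current ART at depth below $k$ equals $\top$. The loop visits $\NodesAt_{\mathcal{T}}(k)$ in strictly increasing $k$, advancing from $k$ to $k+1$ only after the whole frontier at depth $k$ has been expanded without the transfer returning $\top$. The sole event lowering $k$ is a call to \KwRefine, which resumes at the depth $k'$ of the refinement point $t_{\sf prev}$ and deletes only the $t_{\sf curr}$-subtree rooted at depth $k'+1$; every shallower node, and the fact that it is not $\top$, survives untouched. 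An induction over the run then gives the invariant. Since a counter-example of length $L$ is precisely a path to a $\top$-node at depth $L$, processing $\gamma_j$ at value $k_j$ (so its $\top$-node sits at depth $k_j+1$, i.e. $|\gamma_j| = k_j+1$) certifies that $\Abs_{S_{j-1}}$ admits no counter-example of length at most $k_j$, which in particular yields the stated bound ``none of length less than $k_j$''.

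\emph{Monotonicity and the length bound.} The engine for the first assertion is a lemma: if $S \subseteq S'$, then every state reachable in $d$ steps under $\Abs_{S'}$ entails some state reachable in $d$ steps under $\Abs_{S}$ along the same CFG path, and hence $\Abs_{S'}$ has no counter-example of a length for which $\Abs_{S}$ has none. I would prove the entailment by induction on $d$: the base is $\LEMPTY \vdash \LEMPTY$, and for the step, from a reachable $\delta' \vdash \delta$ I use monotonicity of $\llbracket c \rrbracket$ to match each disjunct $\Delta'$ of $\llbracket c \rrbracket(\delta')$ with a disjunct $\Delta$ of $\llbracket c \rrbracket(\delta)$ satisfying $\Delta' \vdash \Delta$, and then the inflationary property of both abstractions to chain $\Abs_{S'}(\Delta') \vdash \Delta' \vdash \Delta \vdash \Abs_{S}(\Delta)$. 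The same induction, together with the observation that a stronger pre-state never faults where a weaker one succeeds (if $\delta' \vdash \delta$ and $\delta \vdash P \SPAND \Delta_F$ then $\delta' \vdash P \SPAND \Delta_F$), shows that a $\top$-node at depth $d$ under $\Abs_{S'}$ forces one at depth $d$ under $\Abs_{S}$; thus ``no counter-example of length below $L$'' is preserved by refinement. Combining this with minimality, $\Abs_{S_{j-1}}$ has no counter-example of length below $|\gamma_j| = k_j+1$, so neither does the finer $\Abs_{S_j}$; as assumption~(c) additionally removes $\gamma_j$ itself, the next counter-example $\gamma_{j+1}$ found under $\Abs_{S_j}$ must satisfy $|\gamma_{j+1}| \geq |\gamma_j|$.

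\emph{Main obstacle.} I expect the monotonicity lemma to be the crux, because \S\ref{sec:forward-analysis} explicitly warns that the transfer functions need not be monotone: the disjunct-matching step breaks without it. The argument therefore hinges on assumption~(a), under which the exact transfer is the strongest post-condition and so is monotone in its pre-state, restoring exactly the property the induction needs. I would make the use of exactness explicit, and handle the disjunctive reading of $\vdash$ carefully (a disjunction entails another iff each of its disjuncts is covered) so that the per-state induction composes correctly through the joins taken at shared ART nodes.
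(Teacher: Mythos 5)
Your overall architecture matches the paper's proof, which is a compressed version of exactly your plan: refinement monotonicity ($\Abs_{S} \preceq \Abs_{S'}$ for $S \subseteq S'$) makes the recomputed invariants at least as strong, so no new counter-examples can appear in the recomputed part of the ART; assumption~(c) eliminates $\gamma_j$ itself; and the breadth-first processing order turns this into both claims (your BFS invariant for the second claim is actually more explicit than the paper, which leaves it implicit). However, there is a genuine broken step in your ``engine'' lemma. The chain $\Abs_{S'}(\Delta') \vdash \Delta' \vdash \Delta \vdash \Abs_S(\Delta)$ relies on the first link $\Abs_{S'}(\Delta') \vdash \Delta'$, which is the \emph{reverse} of inflationarity: the paper requires $\Delta' \vdash \Abs_{S'}(\Delta')$, i.e.\ abstraction \emph{weakens} the heap (e.g.\ $\NODE \SPAND \NODE \rightsquigarrow \LSEG$ is strictly weakening). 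So the first link is false whenever the abstraction does anything nontrivial, and the chain collapses precisely at the point where the induction must pass through an abstraction step.

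The damage is not merely notational. To conclude $\Abs_{S'}(\Delta') \vdash \Abs_S(\Delta)$ from $\Delta' \vdash \Delta$ one must route through $\Abs_{S'}(\Delta') \vdash \Abs_{S'}(\Delta) \vdash \Abs_S(\Delta)$: the second link is the definition of $\preceq$, but the first is monotonicity of $\Abs_{S'}$ with respect to $\vdash$ --- a property nowhere assumed in the paper and not automatic for syntactic rewriting-based abstractions such as $\rightsquigarrow_T^{\sf mls}$, where logically comparable heaps may rewrite to $\vdash$-incomparable canonical forms. In fairness, the paper's own three-sentence proof silently asserts exactly the fact you tried to prove (``invariants computed in $\CSH_{S(j+1)}$ will be at least as strong as those in $\CSH_{S(j)}$''), so your instinct to supply an argument is sound, and several of your ingredients are correct: the appeal to exactness~(a) to recover monotonicity of the forward transfer, the fault-monotonicity observation (if $\delta' \vdash \delta$ and $\delta \vdash P \SPAND \Delta_F$ then $\delta' \vdash P \SPAND \Delta_F$, so a $\top$ under the refined analysis forces one under the coarse analysis), and the use of~(c) to rule out $\gamma_j$ reappearing. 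But as written the induction step does not go through; a repaired version needs $\vdash$-monotonicity of the abstraction function stated as an additional hypothesis (or proved for each concrete domain), and your closing caveat about disjunct coverage at joined ART nodes is a second place where the argument is only gestured at, in both your write-up and the paper's.
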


\begin{proof}
Let $S(j)$ denote the multiset from the $j$-th iteration of \KwRefine. 
Since $\Abs_{S(j)} \preceq \Abs_{S(j+1)}$, no new counter-examples can 
appear in the part of the ART that is recomputed in the $(j+1)$-th step 
(invariants computed in $\CSH_{S(j+1)}$ will be at least as strong as 
those in $\CSH_{S(j)}$). Since (c) guarantees that the previous 
counter-example has been eliminated, if a new counter-example is found 
then the corresponding value of $k$ in the while-loop will be either 
the same as in the $j$-th step or larger.
\qed
\end{proof}

\begin{theorem}[Relative completeness]
If the safety property is implied by an inductive invariant expressible
in $\CSH_S$ for some finite multiset $S$ and assuming that those
elements would eventually be selected from counter-examples by
\KwRefineAbstraction then
Alg.~\ref{fig:Fixpoint-Abstraction-Refinement} terminates without
throwing an error.
\end{theorem}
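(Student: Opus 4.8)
The plan is to combine two ingredients: that the multiset parameter grows monotonically and stabilises at a sufficiently refined value, and that over any \emph{fixed} finite multiset the forward analysis terminates because its abstract domain is finite. Let $S$ be the finite multiset whose domain $\CSH_S$ carries the assumed sufficient inductive invariant $I$ (so $I$ entails the safety property, avoids $\top$, and is closed under the forward transfer and $\Abs_S$), and let $S(0)=\emptyset\subseteq S(1)\subseteq\cdots$ denote the sequence of multisets produced at successive calls to \KwRefine. Line~\ref{line:call_selectsymbol} only ever unions fresh symbols into $S$, so this sequence is non-decreasing.

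First I would show the refinement process halts. By the theorem's hypothesis, across refinement steps \KwRefineAbstraction eventually selects every element of the finite multiset $S$; hence there is a step $j_0$ with $S\subseteq S(j_0)$. By the monotonicity property of the family, $S\subseteq S(j_0)$ gives $\Abs_S\preceq\Abs_{S(j_0)}$ and $\CSH_S\subseteq\CSH_{S(j_0)}$, so the invariant $I$ remains representable and inductive in $\CSH_{S(j_0)}$. I would then argue that with $\Abs_{S(j_0)}$ the forward analysis produces no $\top$: since (by assumption~(a)) the forward transfer is exact, the invariant computed at each node is the strongest representable one and hence entails $I$; because $\top$ is the top of $(\DSH^{\top},\vdash)$, entailing $I\neq\top$ forces the computed invariant to differ from $\top$ at every node. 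Thus no counter-example is generated after step $j_0$, \KwRefine is not invoked again, and only finitely many refinements occur. Let $S^{\ast}$ be the (finite) multiset at which refinement stabilises.

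Next I would establish termination and absence of error for the final forward pass. Because the abstraction function $\Abs_{S^{\ast}}$ has a finite codomain $\CSH_{S^{\ast}}$, the associated disjunctive domain is finite, and since propagation is inflationary the inflationary fixpoint of $X\mapsto X\cup\Phi(X)$ is reached after finitely many iterations of the main loop on line~\ref{fig:Fixpoint:main-loop}; hence $\NodesAt(k)=\emptyset$ for some $k$ and the algorithm halts. No error is thrown: \KwRefine reaches the error exit on line~\ref{line:exit_error} only after tracing a counter-example back to the root without locating a refinement point, but no counter-example arises once $S^{\ast}\supseteq S$, and the Relative-progress theorem guarantees that the counter-examples discharged in the earlier steps $j<j_0$ never reappear. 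Each of those finitely many earlier counter-examples is resolved by reaching the refinement branch on line~\ref{line:refine_abs} rather than the error exit, as guaranteed by the selection assumption. Consequently the algorithm terminates with $\Inv_{\mathcal{T}}$ free of $\top$.

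The main obstacle is the precision step in the second paragraph: arguing that, once $S^{\ast}\supseteq S$, the forward analysis genuinely avoids $\top$ despite the transfer functions being non-monotone. The clean Knaster--Tarski comparison of the computed least fixpoint with the post-fixpoint $I$ is unavailable here, so the argument must instead lean on exactness of the forward transfer (assumption~(a))---which makes the computed invariant at least as strong as $I$---together with the Relative-progress theorem and the elimination guarantee~(c) to exclude the reappearance of previously discharged counter-examples. Pinning down precisely why exactness plus refinement-stability yields a $\top$-free fixpoint everywhere, rather than merely eliminating the particular counter-examples already seen, is the delicate part, and is the reason the result is stated relative to the idealised assumptions (a)--(c).
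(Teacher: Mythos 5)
Your overall route coincides with the paper's own (one-sentence) proof: by the selection hypothesis the multiset parameter, which only ever grows, eventually reaches some $S' \supseteq S$ --- the paper derives the ``eventually'' from the breadth-first order of Alg.~\ref{fig:Fixpoint-Abstraction-Refinement} together with finiteness of counter-examples --- and from that point the algorithm can compute an invariant in $\CSH_{S'}$, with finiteness of $\CSH_{S'}$ and inflationary propagation bounding the remaining work. Your first and third paragraphs are a more detailed rendering of exactly that argument, and your use of Relative progress to rule out recurrence of discharged counter-examples is in the intended spirit.

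However, the one step you try to make fully precise contains a genuine error. You claim that exactness of the forward transfer (assumption~(a)) makes ``the invariant computed at each node the strongest representable one and hence entails $I$.'' Exactness is a property of $\llbracket c \rrbracket$ alone; the weakening in the algorithm happens in the abstraction step $\Delta_{\sf abs}' := \Abs_{S'}(\Delta')$, and nothing in the definitions forces $\Abs_{S'}(\Delta) \vdash I(n)$ whenever $\Delta \vdash I(n)$. The family only guarantees that each $\Abs_{S'}$ is inflationary and that $\Abs_{S} \preceq \Abs_{S'}$ for $S \subseteq S'$; the mere expressibility of $I$ in $\CSH_{S'}$ does not make $\Abs_{S'}$ return values below $I$ --- an abstraction function could map every heap to a weak normal form even though strictly stronger elements exist in its codomain. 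So the computed invariant need not entail $I$, and $\top$-freedom after stabilisation cannot be derived from exactness in the way you state; the missing glue is assumption~(c) (each processed counter-example is genuinely eliminated by the symbols \KwRefineAbstraction selects), which, combined with Relative progress, is what prevents the error exit on line~\ref{line:exit_error} from ever being reached. To your credit, your final paragraph flags precisely this as the delicate point, so the write-up is honest about the gap --- but as a proof, the only concrete argument offered for the crux is invalid, whereas the paper deliberately leaves that step as part of its stated idealised hypotheses rather than attempting to derive it from exactness.
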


\begin{proof}
Since Alg.~\ref{fig:Fixpoint-Abstraction-Refinement} proceeds in a
breadth-first fashion and counter-examples to safety properties are
finite, all counter-examples leading to picking elements of $S$ will
eventually be processed, enabling
Alg.~\ref{fig:Fixpoint-Abstraction-Refinement} to compute an invariant
in $\CSH_{S'}$ for some $S' \supseteq S$.
\qed
\end{proof}

\noindent
Assumptions (a) and (b) can be satisfied (although for implementation
efficiency we may choose not to).  Assumption (c) is more problematic.

\paragraph{Forward transfer.}
Without exactness, a spurious counter-example may never be eliminated,
because our analysis refines only the abstraction function. Since
separation logic analyses effectively calculate strongest
post-conditions,\footnote{modulo deallocation---although even for that
case the forward transfer is tight in actual implementations.} we in
fact have exact forward transfer, meaning spurious counter-examples can
always be eliminated.

\mikecomment{Explanation for the deallocation comment?}

\paragraph{Backward transfer.}
In our analysis abduction is performed on finite unfoldings of predicates,
modulo an arbitrary frame, fixed along the counter-example.  As a result,
counter-examples are always expressed as data-structures of
a particular size (rather than e.g. general lists which could be of any size).
This means that counter-examples can be expressed
in the points-to fragment of separation logic, in which optimal solutions are
possible~\cite{DBLP:journals/jacm/CalcagnoDOY11}. Thus in principle we
can satisfy (b) and make backward transfer precise. However, such a
complete abductive inference is of exponential complexity since it has
to consider all aliasing possibilities. In our implementation, we use a
polynomial heuristic algorithm (similar to
\cite{DBLP:journals/jacm/CalcagnoDOY11}) which may miss some solutions,
but in practice has roughly the same cost as frame inference.


\paragraph{Selecting symbols.}
Due to its heuristic nature, it is unlikely that our implementation of
\KwRefineAbstraction satisfies assumption (c). Furthermore, we are
unsure whether it is generally possible to construct
\KwRefineAbstraction that would satisfy (c) for an arbitrary parametric
domain. While at least in principle we could employ a trivial heuristic
which enumerates all multisets of symbols, that would be impractical.
The problem of picking symbols which are certain to eliminate a
particular counter-example seems uncomfortably close to selecting
predicates for predicate abstraction sufficient to prove a given
property. Many effective heuristics used in this area are incomplete
(in that they may fail to find an adequate set of predicates when one
exists), and there has been only a limited progress in characterising
complete methods.\footnote{See Ranjit Jhala, Kenneth L. McMillan. A
Practical and Complete Approach to Predicate Refinement. In
\emph{TACAS}, 2006, for an instance of such complete predicate
refinement method (for difference bound arithmetic over the
rationals).} Unfortunately, all such complete predicate refinement
methods rely on interpolation, a luxury which we do not (yet) have in
separation logic. More work is needed to understand the intrinsic
complexity of ways for doing refinement in separation logic analyses
such as the one proposed in this paper in relation to the logical
properties of separation logic domains.

\section{Details of Other Multiset-Parametric Domains}
\label{app:list-multiset-refinement}

Here we give detailed definitions of the two analysis families that we
sketched in \S\ref{sec:list-multiset-refinement}.

\subsection{Linked Lists with Address Refinement}

This analysis allows refinement on protecting particular addresses, rather than
values.  We work with the domain of linked lists, which we denote $\SHrls$,
built from plain spatial predicates $\NODE$ and $\LSEG$.

\begin{figure}[t]
\centering
$$
\begin{array}{r@{\;\;}c@{\;\;}l}
\Delta \SPAND \sigma_1(e_1,x') \SPAND \sigma_2(x',e_2) & \rightsquigarrow_T^{\sf rls} & \Delta \SPAND \LSEG(e_1,\NIL) \\[0.2em]
\multicolumn{3}{r}{
\;\;\;\;\;\;\;\;\;\;\;\;\;\;\;\;\;\;\;\;\;\;\;\;\;\;\;\;\;\;
\textrm{if } x' \notin \EVars(\Delta,e_1,e_2) \wedge \Delta \vdash e_2 = \NIL \wedge \forall t \in T \,.\, \Delta \nvdash e_1 = t
}
\\[0.7em]
\Delta \SPAND \sigma_1(e_1,x') \SPAND \sigma_2(x',e_2) \SPAND \sigma_3(e_3,f) & \rightsquigarrow_T^{\sf rls}
& \Delta \SPAND \LSEG(e_1,e_2) \SPAND \sigma_3(e_3,f) \\[0.2em]
\multicolumn{3}{r}{
\;\;\;\;\;\;\;\;\;\;\;\;\;\;\;\;\;\;\;\;\;\;\;\;\;\;\;\;\;\;
\textrm{if } x' \notin \EVars(\Delta, e_1, e_2, e_3, f) \wedge \Delta \vdash e_2 = e_3 \wedge \forall t \in T \,.\, \Delta \nvdash e_1 = t
}
\end{array}
$$
\vspace{-2em} 
\caption{Abstract reduction system $\rightsquigarrow_T^{\sf rls}$
defining the abstraction function $\Abs_T^{\sf rls}$. First three rules
(not shown) are the same as in Fig.~\ref{fig:mls-abs-definition}.
In the shown rules, $\sigma, \sigma_i$ range over $\{ \NODE, \LSEG \}$
and the data field is elided.}
\label{fig:ls-abs-definition}
\end{figure}

Our abstraction works similarly to the abstraction for plain
linked lists \cite{DBLP:conf/tacas/DistefanoOY06} except that it can be
refined to preserve nodes at particular addresses.
Fig.~\ref{fig:ls-abs-definition} shows rewrite rules realising the
abstraction $\Abs_T^{\sf rls}$. The rules are guarded by a finite set of
terms $T$ representing locations---each rule is enabled only if
the spatial object triggering the rule is not among the
locations in $T$.


\begin{lemma}
$\CSH_T^{\sf rls} \eqdef \{ \Delta \mid \Delta \not \vdash \LFALSE
\PAND \Delta \not \rightsquigarrow_T^{\sf rls} \}$ is finite and
$\Abs_T^{\sf rls} \colon \SH \to \CSH_T^{\sf rls}$ is a sound
abstraction.
\end{lemma}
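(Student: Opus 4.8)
The plan is to prove the two claims separately, closely mirroring the structure already established for the value-refinement domain $\SHmls$ in the preceding lemmas (Finiteness, Soundness, Monotonicity), since the address-refinement domain $\SHrls$ is a simplification of it. For \textbf{soundness}, I would first observe that it suffices to show $\Delta \rightsquigarrow_T^{\sf rls} \Delta'$ implies $\Delta \vdash \Delta'$, after which the argument is identical to the $\SHmls$ case: the abstraction function $\Abs_T^{\sf rls}$ is obtained by exhaustively applying the rewrite rules, so soundness of $\Abs_T^{\sf rls}$ as a map $\SH \to \CSH_T^{\sf rls}$ follows from the transitivity of $\vdash$ together with termination of $\rightsquigarrow_T^{\sf rls}$. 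The per-rule entailments are routine unfolding/folding facts about $\LSEG$ and $\NODE$: for the first shown rule, $\sigma_1(e_1,x') \SPAND \sigma_2(x',e_2)$ with $e_2 = \NIL$ entails $\LSEG(e_1,\NIL)$ because concatenating two adjacent list-like objects yields a list segment; the second shown rule is the analogous concatenation in the presence of a frame $\sigma_3(e_3,f)$, using $e_2 = e_3$. The three unshared rules inherited from Fig.~\ref{fig:mls-abs-definition} are already known to be sound.

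For \textbf{finiteness}, the strategy is to bound, up to provable equality, the number of reduced heaps $\Delta$ that are both consistent ($\Delta \not\vdash \LFALSE$) and irreducible ($\Delta \not\rightsquigarrow_T^{\sf rls}$). The key point is that once $\rightsquigarrow_T^{\sf rls}$ can no longer fire, every maximal chain of adjacent spatial predicates has been collapsed into a single $\LSEG$ or $\NODE$, except precisely at the ``protected'' addresses named in $T$ and at the finitely many program variables that may anchor the endpoints of segments. Since $T$ is finite and there are only finitely many program variables, the number of distinct spatial predicates that can survive in an irreducible heap is bounded, and hence so is the number of distinct irreducible spatial parts; combined with the finitely many distinct pure parts expressible over the available variables, the set $\CSH_T^{\sf rls}$ is finite modulo logical equivalence.

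I would carry out the finiteness argument in the same shape as the (unstated but presumably analogous) $\SHmls$ finiteness proof: first argue that irreducibility forces every $\NODE$/$\LSEG$ occurring at an unprotected, non-program-variable address to be eliminated by one of the concatenation rules, so that all spatial predicates in an irreducible heap are rooted at a variable or a $T$-address; then bound the total count of such predicates by a function of $|T|$ and the number of program variables; and finally bound the pure part. The \textbf{main obstacle} I anticipate is the finiteness direction rather than soundness: I must verify that the $T$-side-condition ($\forall t \in T.\,\Delta \nvdash e_1 = t$) genuinely blocks reductions only at finitely many anchor points and does not permit an unbounded proliferation of irreducible configurations through chains of existential variables that are neither program variables nor in $T$. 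Making this precise requires a careful case analysis showing that any such ``free'' existential chain is always eligible for one of the collapse rules, so that irreducibility leaves no unbounded structure behind --- this is where the bulk of the technical care lies, though it is structurally the same reasoning that underpins termination and finiteness for the standard Distefano et al.\ abstraction.
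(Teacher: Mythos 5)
Your proposal is correct and takes essentially the route the paper intends: the paper states this lemma without an explicit proof, but its analogous $\SHmls$ soundness lemma is justified by exactly your per-rule observation that $\Delta \rightsquigarrow_T^{\sf rls} \Delta'$ implies $\Delta \vdash \Delta'$ (combined with termination of the exhaustive rewriting), and finiteness rests, as you argue, on the Distefano-et-al.-style bound that an irreducible consistent heap can only retain spatial predicates anchored at program variables or at the finitely many $T$-protected addresses. The obstacle you flag --- checking that the guard $\forall t \in T \,.\, \Delta \nvdash e_1 = t$ blocks collapses at only finitely many junctions, so chains of fresh existential link variables cannot survive irreducibility --- is resolved just as you suggest, so nothing essential is missing.
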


\begin{lemma}
If $T_1 \subseteq T_2$ then $\Abs_{T_1}^{\sf rls} \preceq
\Abs_{T_2}^{\sf rls}$.
\end{lemma}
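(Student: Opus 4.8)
The plan is to verify the two conditions in the definition of $\preceq$ --- namely $\CSH_{T_1}^{\sf rls} \subseteq \CSH_{T_2}^{\sf rls}$ and $\Abs_{T_2}^{\sf rls}(\Delta) \vdash \Abs_{T_1}^{\sf rls}(\Delta)$ for every $\Delta \in \SH$ --- and the single fact driving both is that enlarging the protected set $T$ can only \emph{disable} reduction rules. First I would inspect the side conditions of the rules in Fig.~\ref{fig:ls-abs-definition}: the three rules shared with Fig.~\ref{fig:mls-abs-definition} do not mention $T$, while each address-preserving fold rule carries a guard of the form $\forall t \in T \,.\, \Delta \nvdash e_1 = t$. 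This guard is antitone in $T$: since $T_1 \subseteq T_2$, satisfying it for $T_2$ requires ruling out strictly more aliases, so every rule instance enabled under $T_2$ is also enabled under $T_1$. Hence $\rightsquigarrow_{T_2}^{\sf rls} \,\subseteq\, \rightsquigarrow_{T_1}^{\sf rls}$ as relations on consistent symbolic heaps.

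From this inclusion the first condition is immediate: if $\Delta$ is a $\rightsquigarrow_{T_1}^{\sf rls}$-normal form then a fortiori no $\rightsquigarrow_{T_2}^{\sf rls}$-rule applies either, so $\Delta$ is also $\rightsquigarrow_{T_2}^{\sf rls}$-normal; together with the shared consistency requirement $\Delta \not\vdash \LFALSE$ this gives $\CSH_{T_1}^{\sf rls} \subseteq \CSH_{T_2}^{\sf rls}$. For the entailment I would argue through reduction sequences. By construction $\Delta$ reduces to $\Abs_{T_2}^{\sf rls}(\Delta)$ under $\rightsquigarrow_{T_2}^{\sf rls}$, and since each such step is also a $\rightsquigarrow_{T_1}^{\sf rls}$-step, the heap $\Abs_{T_2}^{\sf rls}(\Delta)$ is reachable from $\Delta$ under $\rightsquigarrow_{T_1}^{\sf rls}$. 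It need not be $T_1$-normal, so I continue applying $T_1$-rules until a normal form is reached, which exists by termination (the Finiteness lemma) and equals $\Abs_{T_1}^{\sf rls}(\Delta)$. Thus $\Abs_{T_2}^{\sf rls}(\Delta)$ reduces to $\Abs_{T_1}^{\sf rls}(\Delta)$ under $(\rightsquigarrow_{T_1}^{\sf rls})^{*}$, and applying the soundness property $\Delta \rightsquigarrow_T^{\sf rls} \Delta' \Rightarrow \Delta \vdash \Delta'$ (transitively) along this chain yields $\Abs_{T_2}^{\sf rls}(\Delta) \vdash \Abs_{T_1}^{\sf rls}(\Delta)$, as required.

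The delicate point --- and the main obstacle --- is the claim that continuing from $\Abs_{T_2}^{\sf rls}(\Delta)$ to a $T_1$-normal form lands \emph{exactly} on $\Abs_{T_1}^{\sf rls}(\Delta)$, which presupposes that $\Abs_{T_1}^{\sf rls}$ is well defined independently of reduction order. I would discharge this by showing $\rightsquigarrow_{T}^{\sf rls}$ has unique normal forms: termination is already established, so by Newman's lemma it suffices to check local confluence, i.e. that the finitely many critical overlaps between the fold rules (and with the merge and garbage-collection rules inherited from the mls system) can be joined. If instead $\Abs_T$ is intended as a fixed deterministic normalisation strategy, I would verify that the strategy applied to $\Delta$ factors through $\Abs_{T_2}^{\sf rls}(\Delta)$, so the two normal forms coincide by construction. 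Either way, once well-definedness is secured the entailment follows purely from soundness of individual steps. The same template proves the value-refinement monotonicity of Fig.~\ref{fig:mls-abs-definition}, with the antitone address guards replaced by the monotonicity of $\Project_T$ in $T$ (a larger $T$ caps fewer multiplicities, so the final rule weakens less and $\rightsquigarrow_{T_2}^{\sf mls}$ is again subsumed by $\rightsquigarrow_{T_1}^{\sf mls}$).
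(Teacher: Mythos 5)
The paper states this lemma bare, with no accompanying proof, so there is no official argument to compare against; on its own terms your proposal is correct and is almost certainly the intended reasoning. The crucial point --- that in Fig.~\ref{fig:ls-abs-definition} the parameter $T$ appears only in guards $\forall t \in T \,.\, \Delta \nvdash e_1 = t$, which are antitone in $T$, whence $T_1 \subseteq T_2$ yields $\rightsquigarrow_{T_2}^{\sf rls} \subseteq \rightsquigarrow_{T_1}^{\sf rls}$ --- delivers both conditions of $\preceq$ exactly as you say: $T_1$-normal forms are a fortiori $T_2$-normal (giving $\CSH_{T_1}^{\sf rls} \subseteq \CSH_{T_2}^{\sf rls}$), and prolonging the $T_2$-normalisation of $\Delta$ by $T_1$-steps plus single-step soundness gives $\Abs_{T_2}^{\sf rls}(\Delta) \vdash \Abs_{T_1}^{\sf rls}(\Delta)$. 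Your observation that this last step secretly needs $\Abs_{T_1}^{\sf rls}$ to be well defined (unique normal forms) is a genuine point the paper glosses over: it merely defines $\Abs_T$ by ``exhaustively applying the rules until none apply.'' One caution about your fallback option: if $\Abs_T$ is instead a fixed deterministic strategy, there is no reason the $T_1$-strategy run on $\Delta$ should pass through $\Abs_{T_2}^{\sf rls}(\Delta)$ --- the $T_1$-strategy can fire rule instances that the $T_2$-guards block, and in a different order --- so ``verifying the factoring'' amounts to proving a commutation property of essentially the same strength as confluence. In other words, Newman's lemma plus a critical-pair check is not one of two interchangeable discharges; it is the discharge.

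A separate slip, harmless for the statement at hand but worth flagging: your closing claim that the same template gives the $\SHmls$ lemma with ``$\rightsquigarrow_{T_2}^{\sf mls}$ subsumed by $\rightsquigarrow_{T_1}^{\sf mls}$'' is false as stated, because in Fig.~\ref{fig:mls-abs-definition} the parameter $T$ occurs in rule \emph{outputs} (via $\Project_T$), not only in guards: the $T_2$-instance of, say, the fourth rule produces $\LSEG(e_1,\NIL,\Project_{T_2}(S_1 \cup S_2,\Pi))$, which is not what the $T_1$-instance produces. The repair is a simulation rather than an inclusion: since $T_1 \subseteq T_2$ gives pointwise smaller caps, one has $\Project_{T_1}(\Project_{T_2}(S,\Pi),\Pi) = \Project_{T_1}(S,\Pi)$, so every $T_2$-step followed by one application of the final projection rule under $T_1$ reproduces the corresponding $T_1$-step, and the required entailment then follows from $\LSEG(e,f,S') \vdash \LSEG(e,f,S)$ whenever $S \subseteq S'$.
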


\subsection{Sorted Linked Lists with Value Refinement}

Lastly, we present an analysis that works in the domain of sorted linked lists.
Our abstraction can be refined to preserve particular values in the list, as
with the analysis described in \S\ref{sec:mls}. However, the domain consists of
ordered lists segments.

\begin{figure}[t]
\centering
\begin{tabular}{@{}rl@{}}
$S=\emptyset$: \hfill {}
\\
$\LSEGless(e,f,[\alpha, \beta \rangle, \emptyset) \;\eqdef\; $ & $\alpha \leq d' < \beta \PAND (\NODE(e,f,\{ d' \}) \POR$\\
& $\NODE(e,x',d') \SPAND \LSEGless(x',f,[d', \beta \rangle,\emptyset))$\\
$S=\{ d \}$: \hfill {}
\\
$\LSEGless(e,f,[\alpha, \beta \rangle,\{ d \})\;\eqdef\;$ & $\NODE(e,f,\{ d \}) \POR$\\
& $d = \alpha \PAND \NODE(e,x',\{ d \}) \SPAND \LSEGless(x',f,[d, \beta \rangle,\emptyset) \POR$\\
& $d \neq \alpha \PAND \alpha \leq d' < \beta \PAND \NODE(e,x',d') \SPAND \LSEGless(x',f,[d', \beta \rangle,\{ d \})$\\
$|S| > 1, d \in S$: \hfill {}
\\
$\LSEGless(e,f,[\alpha, \beta \rangle, S)\;\eqdef\; $ & $d = \alpha \PAND \NODE(e,x',\{ d \}) \SPAND \LSEGless(x',f,[d, \beta \rangle,S \setminus \{ d \}) \POR $\\
& $d \neq \alpha \PAND \alpha \leq d' < \beta \PAND \NODE(e,x',d') \SPAND \LSEGless(x',f,[d', \beta \rangle,S)$
\end{tabular}
\vspace{-1em} 
\caption{Recursive definition of the $\LSEGless$ predicate in domain $\SHsls$.}
\label{fig:slseg-definition}
\end{figure}

\paragraph{Domain.}
The predicate $\LSEGless(x, y, [\alpha, \beta
\rangle, S)$ holds if $x$ points to a \emph{sorted} non-empty list
segment ending with $y$ whose data values are all greater than or equal
to $\alpha$ and less than $\beta$, and for each $d \in {\rm dom}(S)$,
there are at least $S(d)$ nodes in the list with value $d$. Parameters
$\alpha, \beta$ and $S$ satisfy the invariant $I
\colon \forall d \in {\rm dom}(S) \, . \, \alpha \leq d < \beta$.
Sorted lists can be split according to the following rule:
$$\LSEGless(e, f, [\alpha, \beta \rangle, S) = \LSEGless(e, x', [\alpha, \gamma \rangle, S \cap [\alpha, \gamma \rangle)
\SPAND \LSEGless(x', f, [\gamma, \beta \rangle, S \cap [\gamma, \beta \rangle).$$
Folding/unfolding rules for exposing/hiding are similar to the rules
for $\LSEG$ (Fig.~\ref{fig:mlseg-definition}), but in addition
keep track of the involved inequalities. New rules for $\LSEGless$ are
shown in Fig.~\ref{fig:slseg-definition}. Note that each rule
maintains the invariant $I$.

\begin{figure}[t]
\centering
$$
\begin{array}{r@{\;\;}c@{\;\;}l}
\Delta \SPAND \NODE(e_1,x',\{ d_1 \}) \SPAND \NODE(x',e_2,\{ d_2 \}) & \rightsquigarrow_T^{\sf sls} & \\[0.2em]
\multicolumn{3}{r}{
\;\;\;\;\;\;\;\;\;\;\;\;\;\;\;\;\;\;\;\;\;\;\;\;\;\;\;\;\;\;\;\;\;\;\;\;\;\;\;\;\;\;\;\;\;\;\;\;\;\;
\Delta \SPAND \LSEGless(e_1,\NIL,[d_1,d_2 + 1 \rangle,\Project_T(\{ d_1, d_2\}, \Pi))}
\\[0.2em]
\multicolumn{3}{r}{
\;\;\;\;\;\;\;\;\;\;\;\;\;\;\;\;\;\;\;\;\;\;\;\;\;\;\;\;\;\;\;\;\;\;\;\;\;\;\;\;\;\;\;\;\;\;\;\;\;\;
\textrm{if } x' \notin \EVars(\Delta,e_1,e_2) \wedge \Delta \vdash e_2 = \NIL
}
\\[0.7em]
\Delta \SPAND \LSEGless(e_1,x',[\alpha_1, \beta_1 \rangle,S_1) \SPAND \LSEGless(x',e_2,[\alpha_2, \beta_2 \rangle,S_2) & \rightsquigarrow_T^{\sf sls} & \\[0.2em]
\multicolumn{3}{r}{
\;\;\;\;\;\;\;\;\;\;\;\;\;\;\;\;\;\;\;\;\;\;\;\;\;\;\;\;\;\;\;\;\;\;\;\;\;\;\;\;\;\;\;\;\;\;\;\;\;\;
\Delta \SPAND \LSEGless(e_1,\NIL,[\alpha_1, \beta_2 \rangle, \Project_T(S_1 \cup S_2, \Pi))
}
\\[0.2em]
\multicolumn{3}{r}{
\;\;\;\;\;\;\;\;\;\;\;\;\;\;\;\;\;\;\;\;\;\;\;\;\;\;\;\;\;\;\;\;\;\;\;\;\;\;\;\;\;\;\;\;\;\;\;\;\;\;
\textrm{if } x' \notin \EVars(\Delta,e_1,e_2) \wedge \Delta \vdash e_2 = \NIL \wedge \beta_1 \leq \alpha_2
}
\end{array}
$$
\vspace{-2em} 
\caption{Selected rules of the abstract reduction system $\rightsquigarrow_T^{\sf sls}$
defining the abstraction function $\Abs_T^{\sf sls}$.}
\label{fig:sls-abs-definition}
\end{figure}

\paragraph{Abstraction.}
In the abstraction, we proceed similarly as in
Fig.~\ref{fig:mls-abs-definition} but also maintain the invariant
$I$. Fig.~\ref{fig:sls-abs-definition} shows rewrite rules
corresponding to the fourth rule of Fig.~\ref{fig:mls-abs-definition}
for $\sigma_1 = \sigma_2 = \NODE$ and $\sigma_1 = \sigma_2 =
\LSEGless$. The rest of the cases for
$\sigma_i$ are analogous to the fourth rule, and the fifth rule of
Fig.~\ref{fig:mls-abs-definition}.
The resulting abstraction $\Abs_T^{\sf sls}$ satisfies the following lemmas:

\begin{lemma}
For $\CSH_T^{\sf rss} \eqdef \{ \Delta \mid \Delta \not \vdash \LFALSE
\PAND \Delta \not \rightsquigarrow_T^{\sf sls} \}$, $\Abs_T^{\sf sls}
\colon \SH \to \CSH_T^{\sf rls}$ is a sound abstraction. If the domain
of values is finite then $\CSH_T^{\sf rss}$ is also finite.
\end{lemma}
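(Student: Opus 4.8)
The plan is to follow the two-step template used for the $\SHmls$ Finiteness and Soundness lemmas: prove soundness by showing $\rightsquigarrow_T^{\sf sls}$ is pointwise entailment-preserving and terminating, and then bound the set of normal forms to obtain finiteness. Soundness of $\Abs_T^{\sf sls}$ reduces to the implication $\Delta \rightsquigarrow_T^{\sf sls} \Delta' \Rightarrow \Delta \vdash \Delta'$, since $\Abs_T^{\sf sls}(\Delta)$ is computed by exhaustively applying the rules and $\vdash$ is transitive; well-definedness of $\Abs_T^{\sf sls}$, and hence that its image lies in $\CSH_T^{\sf rss}$, follows once we check that the system has no infinite reduction sequences (so normal forms exist) and that inconsistent heaps are filtered out.

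First I would verify soundness rule-by-rule against the recursive definition of $\LSEGless$ in Fig.~\ref{fig:slseg-definition}. The existential-elimination and garbage-collection rules coincide with their $\rightsquigarrow_T^{\sf mls}$ counterparts (Fig.~\ref{fig:mls-abs-definition}), so their soundness transfers verbatim. For the two new rules (Fig.~\ref{fig:sls-abs-definition}) the point is that merging adjacent cells or segments must yield a \emph{sorted} segment with correct bounds. For the two-node rule, $\NODE(e_1,x',\{d_1\}) \SPAND \NODE(x',e_2,\{d_2\})$ with $e_2 = \NIL$ folds into $\LSEGless(e_1,\NIL,[d_1,d_2+1\rangle,\Project_T(\{d_1,d_2\},\Pi))$ because the half-open interval $[d_1,d_2+1\rangle$ exactly bounds the two stored values $d_1 \leq d_2$; for the two-segment rule, the side condition $\beta_1 \leq \alpha_2$ guarantees every value in the left segment lies below every value in the right, so the concatenation is sorted with merged bounds $[\alpha_1,\beta_2\rangle$. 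In both cases $\Project_T(\cdot,\Pi)$ returns a sub-multiset of the true frequency multiset, i.e. a logically weaker lower-bound constraint, so the folding step weakens the heap and the entailment holds. I would also confirm that each rule preserves the invariant $I$, which is needed for the output to be a legal $\SHsls$ formula, and that the unshown cases (the remaining $\sigma_i$ combinations, the analogue of the fifth rule, and the final projecting rule $\LSEGless(e,f,[\alpha,\beta\rangle,S) \rightsquigarrow_T^{\sf sls} \LSEGless(e,f,[\alpha,\beta\rangle,\Project_T(S,\Pi))$) are sound by the same reasoning.

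For finiteness I would bound the number of consistent normal forms, reusing the standing assumptions that $T$ and the set of program variables are finite. As in the $\SHmls$ case, the spatial skeleton is fixed by the finitely many program variables, and the multiset arguments are capped by $\Project_T$, hence bounded by $T$. The genuinely new ingredient is the interval annotation $[\alpha,\beta\rangle$: the merge rules draw the bounds from the data values stored in the list, so the number of distinct intervals is governed by the size of the value domain. This is exactly why the conclusion is conditioned on the domain of values being finite—unlike $\SHmls$ and $\SHrls$, where $T$ and the program variables alone suffice. With a finite value set there are finitely many choices for each of $\alpha$, $\beta$, the capped multiset, and the address slots, so $\CSH_T^{\sf rss}$ is finite.

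The main obstacle will be the interval/ordering bookkeeping rather than the frame-and-entailment reasoning, which is routine. Concretely, I expect the delicate points to be (i) verifying that the side conditions $e_2 = \NIL$ and $\beta_1 \leq \alpha_2$ are exactly what is needed to preserve sortedness under the merges while keeping $I$ intact, and (ii) checking that no reduction sequence can reintroduce an unbounded interval or escape the $\Project_T$ cap. Termination itself should follow, as for $\rightsquigarrow_T^{\sf mls}$, from a well-founded measure: every merge rule strictly decreases the number of spatial conjuncts, and the projecting rule only shrinks a multiset and reaches a fixpoint on each predicate in one step.
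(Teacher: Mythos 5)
Your proposal is correct and follows exactly the template the paper intends: the paper states this lemma without proof, and the implicit argument (as spelled out for $\Abs_T^{\sf mls}$) is precisely yours---soundness from each $\rightsquigarrow_T^{\sf sls}$ step being entailment-preserving with $\Project_T$ only weakening the frequency lower bounds, termination from a decreasing measure, and finiteness from counting normal forms, with the finite-value-domain hypothesis needed exactly because the interval endpoints range over data values (matching the paper's closing remark that infinite value domains yield infinite ascending chains of intervals). You even correctly flag the one delicate spot, the implicit $d_1 \leq d_2$ requirement in the two-node merge rule of Fig.~\ref{fig:sls-abs-definition}, which the paper's displayed side conditions omit.
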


\begin{lemma}
If $T_1 \subseteq T_2$ then $\Abs_{T_1}^{\sf sls} \preceq
\Abs_{T_2}^{\sf sls}$.
\end{lemma}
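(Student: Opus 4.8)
The plan is to discharge the two obligations in the definition of $\preceq$: that $\CSH_{T_1}^{\sf sls} \subseteq \CSH_{T_2}^{\sf sls}$, and that $\Abs_{T_2}^{\sf sls}(\Delta) \vdash \Abs_{T_1}^{\sf sls}(\Delta)$ for every $\Delta \in \SH$. The whole argument rests on one structural observation about the reduction system $\rightsquigarrow_T^{\sf sls}$ of Fig.~\ref{fig:sls-abs-definition}: the parameter $T$ enters \emph{only} through the operator $\Project_T$ acting on the multiset component of a predicate. Every pattern and every side condition of every rule (the $\EVars$-freshness conditions, the entailments $\Delta \vdash e_2 = \NIL$ and $\Delta \vdash e_2 = e_3$, the interval comparison $\beta_1 \leq \alpha_2$, and the interval endpoints themselves) is independent of $T$. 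The only fact about $\Project$ I will need is its compatibility under composition: for $T_1 \subseteq T_2$ we have $\Project_{T_1}(\Project_{T_2}(S,\Pi),\Pi) = \Project_{T_1}(S,\Pi)$, since capping each frequency first at its $T_2$-bound and then at the smaller $T_1$-bound is the same as capping at the $T_1$-bound directly; in particular $\Project_{T_1}(S,\Pi)$ is a sub-multiset of $\Project_{T_2}(S,\Pi)$, which is itself a sub-multiset of $S$.

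For the inclusion $\CSH_{T_1}^{\sf sls} \subseteq \CSH_{T_2}^{\sf sls}$, take a consistent $\Delta$ in $T_1$-normal form. No merging or clean-up rule fires on $\Delta$, and since the applicability of those rules is $T$-independent, none fires under $\rightsquigarrow_{T_2}^{\sf sls}$ either. The remaining (final) rule caps a segment's multiset via $\Project_T$; being a $T_1$-normal form means every segment multiset $S$ already satisfies $\Project_{T_1}(S,\Pi) = S$, and because $\Project_{T_1}(S,\Pi) \subseteq \Project_{T_2}(S,\Pi) \subseteq S$ this forces $\Project_{T_2}(S,\Pi) = S$ as well, so the final rule is likewise inapplicable under $T_2$. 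Hence $\Delta$ is a $T_2$-normal form, and as consistency is inherited, $\Delta \in \CSH_{T_2}^{\sf sls}$.

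For the entailment I will show that $\Abs_{T_2}^{\sf sls}(\Delta) \rightsquigarrow_{T_1}^{*} \Abs_{T_1}^{\sf sls}(\Delta)$ and then appeal to the sls Soundness lemma, by which each $\rightsquigarrow_{T_1}^{\sf sls}$-step is an entailment, giving $\Abs_{T_2}^{\sf sls}(\Delta) \vdash \Abs_{T_1}^{\sf sls}(\Delta)$. Starting from $\Abs_{T_2}^{\sf sls}(\Delta)$, a $T_2$-normal form, no structural rule can fire (their conditions are $T$-independent and have already failed), so the only $\rightsquigarrow_{T_1}^{\sf sls}$-reductions available are applications of the final rule, which replace each segment's multiset $\Project_{T_2}(S,\Pi)$ by $\Project_{T_1}(\Project_{T_2}(S,\Pi),\Pi)$. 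By the composition identity this equals $\Project_{T_1}(S,\Pi)$, while the intervals and spatial shape are untouched; the resulting heap is therefore exactly the normal form obtained by capping the common structural skeleton at the $T_1$-bound, i.e. $\Abs_{T_1}^{\sf sls}(\Delta)$.

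The crux — and the step needing the most care — is the implicit claim that the \emph{structural} (spatial and interval) skeleton of the normal form is the same whether one reduces under $T_1$ or under $T_2$; this is what lets me factor $\Abs_{T_1}^{\sf sls}$ through $\Abs_{T_2}^{\sf sls}$. It holds because (i) every merging and clean-up decision is determined by the pure part and spatial layout alone, all $T$-independent, so the two reduction systems agree on which structural steps are enabled, and (ii) $\Abs_T^{\sf sls}$ is well defined as a unique normal form (the system is terminating and confluent), so the order in which the $T$-independent structural steps and the $T$-dependent capping steps are interleaved is immaterial. With this in hand, the two facts above combine to yield $\Abs_{T_1}^{\sf sls} \preceq \Abs_{T_2}^{\sf sls}$, exactly as in the analogous mls Monotonicity lemma.
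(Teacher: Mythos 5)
The paper states this lemma (like its $\SHmls$ and $\SHrls$ counterparts) without any proof, so there is no official argument to compare against; judged on its own terms, your proposal has the right architecture and most of the right ingredients. Unfolding $\preceq$ into the inclusion $\CSH_{T_1}^{\sf sls} \subseteq \CSH_{T_2}^{\sf sls}$ plus the entailment $\Abs_{T_2}^{\sf sls}(\Delta) \vdash \Abs_{T_1}^{\sf sls}(\Delta)$, the monotonicity and composition algebra of $\Project$ (both immediate from the pointwise $\min$-capping in its definition), the reduction of the entailment to single-step soundness of $\rightsquigarrow_{T_1}^{\sf sls}$, and the normal-form inclusion argument are all correct. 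One slip is presentational but worth noting: your opening claim that \emph{every} side condition is $T$-independent is false for the capping rule, which (on pain of nontermination) must be read as enabled only when $\Project_T(S,\Pi) \neq S$; your second paragraph silently relies on exactly this reading. Consequently $\rightsquigarrow_{T_1}^{\sf sls}$ and $\rightsquigarrow_{T_2}^{\sf sls}$ are \emph{not} enabled in lockstep --- only their structural fragments are.

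The genuine soft spot is the crux you yourself flag: the factorisation of $\Abs_{T_1}^{\sf sls}$ through $\Abs_{T_2}^{\sf sls}$, justified by (i) $T$-independence of structural steps and (ii) termination plus confluence. The paper establishes only termination (``no infinite reduction sequences''), never confluence, so (ii) is not available off the shelf; and even granting confluence of $\rightsquigarrow_{T_1}^{\sf sls}$, it does not by itself yield the claim, because the run computing $\Abs_{T_2}^{\sf sls}(\Delta)$ is not a $\rightsquigarrow_{T_1}^{\sf sls}$-run: the merge rules bake $\Project_{T_2}$-caps into intermediate multisets, so ``the interleaving is immaterial'' tacitly conflates two different reduction systems. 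The repair is a lockstep simulation: run both systems from $\Delta$ making the same structural choices (legitimate, since structural enabledness never inspects the multiset components) and maintain the invariant that corresponding segments carry identical endpoints and intervals while their multisets have equal $\Project_{T_1}$-images. Preservation of this invariant across merges needs the union-stable strengthening $\Project_{T_1}(\Project_{T_2}(A,\Pi) \cup B, \Pi) = \Project_{T_1}(A \cup B, \Pi)$ for $T_1 \subseteq T_2$ --- another one-line $\min$ computation, but not literally your composition identity, since merges interleave unions with caps. At the $T_2$-normal form, finitely many $T_1$-capping steps then reach a heap that is $T_1$-irreducible (structural rules still fail, and $\Project_{T_1}$ is idempotent), and this heap is $\Abs_{T_1}^{\sf sls}(\Delta)$ provided either confluence or a multiset-insensitive deterministic strategy pins normal forms down; with that in place, single-step soundness closes the entailment exactly as you intended.
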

For infinite value domains, the set $\CSH_T^{\sf rss}$ is infinite
since we have infinite ascending chains of intervals as parameters to
$\LSEGless$. We could recover convergence in such cases by using
widening on the interval domain \cite{DBLP:conf/popl/CousotC77}.

\end{document}